\begin{document}

\title{Optimal Codes from Fibonacci Polynomials and Secret Sharing Schemes}

%\subtitle{Do you have a subtitle?\\ If so, write it here}

%\titlerunning{Short form of title}        % if too long for running head

\author{Mehmet E. Koroglu \and Ibrahim Ozbek \and Irfan Siap}

%\authorrunning{Short form of author list} % if too long for running head

\institute{Mehmet E. Koroglu \at
              Department of Mathematics, Yildiz Technical University, Esenler, Istanbul 34220, Turkey\\
              \email{mkoroglu@yildiz.edu.tr}
              \and
         Ibrahim Ozbek \at
              Department of Mathematics, Yildiz Technical University, Esenler, Istanbul 34220, Turkey\\
              \email{ibrhmozbk@gmail.com}
              \and
         Irfan Siap \at
              Department of Mathematics, Yildiz Technical University, Esenler, Istanbul 34220, Turkey \\
              \email{irfan.siap@gmail.com}}

\date{Received: date / Accepted: date}
% The correct dates will be entered by the editor

\maketitle

\begin{abstract}
In this work, we study cyclic codes that have generators as Fibonacci
polynomials over finite fields. We show that these cyclic codes in most
cases produce families of maximum distance separable and optimal codes with
interesting properties. We explore these relations and present some
examples. Also, we present applications of these codes to secret sharing
schemes.
% \PACS{PACS code1 \and PACS code2 \and more}
\subclass{94B05 \and 94B15 \and 11B39 \and 11B50 \and 94A62}
\end{abstract}

\section{Introduction}
Error correcting codes are applied intensively in digital data transfer and
storage. Due to this important nature, good error correcting codes which can
be considered as codes with best possible parameters so called optimal codes
and codes with rich algebraic structures are important for implementations.
Cyclic codes serve such a purpose and studies on cyclic codes still cover an
important part of the area. A linear code $C$ is a subspace of $V=\mathbb{F}%
_{p}^{n}$ where $\mathbb{F}_{p}$ denotes the finite field with $p$ elements.
The elements of a linear code are called codewords. In order to detect hence
correct errors, Hamming metric serves such a purpose. Given two elements in $%
V$ say $u=(u_{1},u_{2},\ldots ,u_{n}),$ $v=(v_{1},v_{2},\ldots ,v_{n}),$ the
Hamming distance between $u$ and $v$ is the number of places that differ
from each other i.e. $d(u,v)=|\{i|u_{i}\neq v_{i}\}|.$ The smallest nonzero
Hamming distance among the elements of $C$ is referred as the Hamming
distance of the code $C$ and it is usually denoted by $d(C)$ or simply $d.$
If $C$ is a linear code over $\mathbb{F}_{p}$ with dimension $k$ and minimum
distance $d,$ then $C$ is said to be an $[n,k,d]_{p}$-code. If an error
detected while the received word is erroneous, then decoding this word to
the closest codeword in $C$ is called the majority decoding method. It is
also well known that a linear code with minimum distance $d=2t+1$ or $d=2t+2$
can correct up to $t$ errors. Given the length and the dimension, finding a
linear code with best possible minimum distance is an important problem and
it is an open problem except a few special cases. An inner product on $V$ of
$u=(u_{1},u_{2},\ldots ,u_{n}),$ $v=(v_{1},v_{2},\ldots ,v_{n}),$ is defined
as usual $\langle u,v\rangle =\sum_{i=1}^{n}u_{i}v_{i}$ in $\mathbb{F}_{p}.$
Then, we can associate a linear code $C^{\perp },$ called the dual code of $%
C $, to a code $C$ by $C^{\perp }=\{v\in V|\langle u,v\rangle =0\text{ for
all }u\in C\}.$ If $C$ is a linear code of length $n$ and dimension $k,$
then it is well-known that $C^{\perp }$ is a linear code of length $n$ and
dimension $n-k$. In literature, cyclic codes from sequences defined over
some extension fields with special generators have been studied. Sequences
over fields or rings in general have many applications such as Left Shift
Registers (LSR), coding, cryptography, etc \cite{crypt},\cite{crypt2}. This venue
of the research is partially accomplished by considering some special
sequences. In each case some special sequences are studied in order to
understand the cyclic codes derived from them. Here, we study cyclic codes
derived from Fibonacci sequences. In the literature there are studies where
Fibonacci sequences and codes are related but to the best knowledge of the
authors these studies are in different directions compared to the one
presented in this paper. An example of such study is done by Lee et al. \cite%
{fibo1} where linear codes related to Fibonacci sequences are presented and
burst error correction of such families are studied. There are further
studies that are inspired by Fibonacci sequences \cite{fibo3},\cite{fibo2}. In the
sequel we present some basic properties of both Fibonacci sequences and
error correcting codes. In the next section, we relate Fibonacci sequences
with cyclic codes and we study the properties of such codes. Moreover, we
apply these families of codes to construction of secret sharing schemes.

\subsection{Fibonacci Sequences and Some Properties}

In this subsection, we present some basic properties and theorems regarding
Fibonacci sequences that are going to be useful in the following sections.
\begin{definition}
Let $F_{0}=0$ and $F_{1}=1$ be elements of a finite field $\mathbb{F}_{p}.$
Then, the sequence defined by $F_{n}=F_{n-1}+F_{n-2}$ for $n\geq 2$ is
called the Fibonacci sequence in $\mathbb{F}_{p}.$ If the we take the first
two terms of the sequence as $F_{0}=a$ and $F_{1}=b$, then the sequence is
called generalized Fibonacci sequence and we will denote generalized
Fibonacci sequence by $\bar{G}\left( a,b\right)$.
\end{definition}%
\begin{definition}
The smallest $t>0$ such that $F_{0}\equiv F_{t}\mod p$ and $F_{1}\equiv
F_{t+1}\mod p,$ where $F_{t}$ is the $t^{th}$ Fibonacci number, is called
Pisano period of $p$ and we denote this period by $l_{p}.$
\end{definition}For example, the Fibonacci sequence computed in $\mathbb{F}%
_{11}$ is
\begin{equation*}
\underline{0,1,1,2,3,5,8,2,10,1},\mathbf{0,1,1}\ldots
\end{equation*}
which implies that Pisano period $l_{11}$ is $10.$
\begin{theorem}
\label{wall1} \cite{wall} $F_{n}\mod p$ forms a periodic sequence. That is,
the sequence keeps repeating its values periodically.
\end{theorem}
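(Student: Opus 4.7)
The plan is to prove periodicity by a standard pigeonhole argument on consecutive pairs, combined with the observation that the Fibonacci recurrence is reversible modulo $p$.

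First, I would package the state of the recurrence into pairs $s_n = (F_n \bmod p,\ F_{n+1} \bmod p) \in \mathbb{F}_p \times \mathbb{F}_p$. Since this set has only $p^2$ elements, the pigeonhole principle applied to $s_0, s_1, \ldots, s_{p^2}$ guarantees indices $0 \le i < j \le p^2$ with $s_i = s_j$, i.e.\ $F_i \equiv F_j$ and $F_{i+1} \equiv F_{j+1} \pmod{p}$.

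Next I would promote this single coincidence of pairs into full periodicity. A forward induction using $F_{n+2} = F_{n+1} + F_n$ shows that once two consecutive terms agree, all subsequent terms agree: from $s_i = s_j$ we get $F_{n+(j-i)} \equiv F_n \pmod p$ for every $n \ge i$. To conclude that the periodicity starts from $n = 0$ (not merely from $n = i$), I would use the reversibility of the recurrence in the form $F_{n-1} = F_{n+1} - F_n$, which allows us to step the identity $s_i = s_j$ backwards one index at a time. After $i$ backward steps we obtain $s_0 = s_{j-i}$, so the sequence is purely periodic with period dividing $j - i$.

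I do not expect a serious obstacle here; the only delicate point is being precise that the recurrence being a bijection $\mathbb{F}_p^2 \to \mathbb{F}_p^2$ (namely $(x,y) \mapsto (y, x+y)$ with inverse $(x,y) \mapsto (y-x, x)$) is what lets the coincidence propagate backwards to index $0$, rather than giving only eventual periodicity. Defining $l_p$ as the smallest such positive $j - i$ then recovers the Pisano period and matches Definition~2.
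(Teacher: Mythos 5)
Your argument is correct and complete. The paper itself gives no proof of this statement --- it is quoted from Wall's 1960 paper as a known result --- so there is nothing to compare against, but your pigeonhole argument on the pairs $(F_n, F_{n+1}) \in \mathbb{F}_p^2$, together with the observation that $(x,y) \mapsto (y, x+y)$ is a bijection of $\mathbb{F}_p^2$ whose inverse lets you push the repeated pair back to index $0$, is exactly the standard proof and correctly yields \emph{pure} (not merely eventual) periodicity, which is what Definition~2 of the Pisano period requires.
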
There has not been established a direct formula for computing $%
l_{p}$ yet. However, the following theorem gives a restriction for possible
values of $l_{p}$.
\begin{theorem}
\label{wall2} \cite{wall} Let $l_{p}$ denote the period of the Fibonacci
sequence modulo $p.$ Then,
\begin{enumerate}
\item If $p$ is prime and $p\equiv \pm 1\mod 10,$ then $l_{p}|p-1$.
\item If $p$ is prime and $p\equiv \pm 3\mod 10,$ then $l_{p}|2(p+1)$.
\end{enumerate}
\end{theorem}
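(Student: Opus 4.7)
The plan is to work with the Binet formula $F_n=(\phi^n-\psi^n)/(\phi-\psi)$, where $\phi,\psi=(1\pm\sqrt{5})/2$ are the roots of $x^2-x-1$, interpreted inside the smallest field extension of $\mathbb{F}_p$ containing $\sqrt{5}$. Since $\phi+\psi=1$ and $\phi\psi=-1$, the formula is algebraic and transports directly to characteristic $p$, provided $p\neq 2,5$ so that $2$ is invertible and $\phi-\psi=\sqrt{5}$ does not vanish. The Pisano period will then be forced by the multiplicative orders of $\phi$ and $\psi$ in the appropriate field, and the two congruence cases on $p$ will simply determine whether that field is $\mathbb{F}_p$ or $\mathbb{F}_{p^2}$.

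First I would invoke quadratic reciprocity: because $5\equiv 1\pmod{4}$, we have $\left(\tfrac{5}{p}\right)=\left(\tfrac{p}{5}\right)$, which equals $+1$ exactly when $p\equiv\pm 1\pmod{5}$. For odd $p$ this is equivalent to $p\equiv\pm 1\pmod{10}$, while $p\equiv\pm 3\pmod{10}$ picks out the non-residues. In the first case $\sqrt{5}\in\mathbb{F}_p$, so $\phi,\psi\in\mathbb{F}_p^{\times}$ and Fermat's little theorem gives $\phi^{p-1}=\psi^{p-1}=1$. Substituting into Binet yields $F_{n+(p-1)}=F_n$ for every $n$, and in particular $(F_{p-1},F_p)=(F_0,F_1)=(0,1)$, so by the definition of the Pisano period $l_p\mid p-1$.

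For the second case, $\sqrt{5}\in\mathbb{F}_{p^2}\setminus\mathbb{F}_p$, so $\phi,\psi\in\mathbb{F}_{p^2}^{\times}$ and the Frobenius $x\mapsto x^p$ sends $\sqrt{5}\mapsto-\sqrt{5}$, thereby swapping $\phi$ and $\psi$. Hence $\phi^p=\psi$ and combining with $\phi\psi=-1$ I get $\phi^{p+1}=\phi\psi=-1$, so $\phi^{2(p+1)}=1$; symmetrically $\psi^{2(p+1)}=1$. Plugging these into Binet gives $F_{n+2(p+1)}=F_n$ for all $n$, so $l_p\mid 2(p+1)$.

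The only real obstacle is a conceptual one rather than a computational one: justifying that $\phi,\psi$ may be manipulated inside $\mathbb{F}_{p^2}$ exactly as in $\mathbb{R}$, and that the resulting identities among elements of an extension field translate back into identities among the integer-valued Fibonacci residues. This is handled uniformly by noting that both sides of each identity lie a priori in $\mathbb{F}_p$ (being the reductions of the rational integers $F_n$) and that the Frobenius/Fermat argument is carried out inside $\mathbb{F}_{p^2}^{\times}$, whose order $(p-1)(p+1)$ makes both divisibilities $p-1$ and $2(p+1)$ the natural ones to expect.
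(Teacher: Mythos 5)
Your proof is correct. The paper itself gives no argument for this theorem --- it is quoted from Wall's 1960 paper --- so there is no internal proof to compare against; what you have written is the standard self-contained derivation via the Binet formula over $\mathbb{F}_p$ and $\mathbb{F}_{p^2}$. All the key points are in place: the congruences $p\equiv\pm1$ and $p\equiv\pm3\pmod{10}$ automatically exclude $p=2,5$, so $2$ is invertible and $\phi\neq\psi$; quadratic reciprocity correctly sorts the two cases by whether $5$ is a square mod $p$ (this is the same residue/non-residue dichotomy the paper records in its Lemma from Burton); in the inert case the computation $\phi^p=\psi$, hence $\phi^{p+1}=\phi\psi=-1$ and $\phi^{2(p+1)}=1$, is exactly right and explains why the bound is $2(p+1)$ rather than $p+1$. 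The one step you use implicitly --- that the minimal period of a purely periodic sequence divides every period, so that $F_{n+N}=F_n$ for all $n$ yields $l_p\mid N$ --- is standard and worth a sentence, justified by the invertibility of the Fibonacci recurrence. Wall's original treatment works instead with the order of the matrix $\bigl(\begin{smallmatrix}1&1\\1&0\end{smallmatrix}\bigr)$ in $GL_2(\mathbb{F}_p)$, which amounts to the same eigenvalue analysis; your version is the cleaner route and also makes transparent the finer fact (used elsewhere in the paper via $\beta(p)$) that $\phi^{p+1}=-1$ forces $F_{p+1}\equiv 0$ in the second case.
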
%
\begin{lemma}
\cite{Burton} $5$ is a quadratic residue modulo primes of the form $5t\pm 1$
and a quadratic non-residue modulo primes of the form $5t\pm 2$.
\end{lemma}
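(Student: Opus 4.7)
The plan is to reduce the statement to a computation with the Legendre symbol $\left(\frac{5}{p}\right)$ and then invoke quadratic reciprocity. Specifically, I would show $\left(\frac{5}{p}\right) = 1$ exactly when $p \equiv \pm 1 \pmod{5}$ (equivalently, when $p$ is of the form $5t \pm 1$), and $\left(\frac{5}{p}\right) = -1$ exactly when $p \equiv \pm 2 \pmod{5}$.

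First I would note that $5 \equiv 1 \pmod 4$, so by the law of quadratic reciprocity (in the form $\left(\frac{q}{p}\right)\left(\frac{p}{q}\right) = (-1)^{\frac{p-1}{2}\cdot \frac{q-1}{2}}$ for odd primes $p \neq q$), the factor $(-1)^{\frac{p-1}{2}\cdot\frac{5-1}{2}} = (-1)^{p-1} = 1$ collapses, yielding the clean identity
\begin{equation*}
\left(\frac{5}{p}\right) = \left(\frac{p}{5}\right).
\end{equation*}
This reduces the problem to evaluating $\left(\frac{p}{5}\right)$, which depends only on the residue $p \pmod 5$.

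Next I would list the nonzero squares modulo $5$: since $1^2 \equiv 1$, $2^2 \equiv 4$, $3^2 \equiv 4$, $4^2 \equiv 1$, the quadratic residues mod $5$ are $\{1,4\}$ and the nonresidues are $\{2,3\}$. For a prime $p \neq 5$, the congruence class of $p$ modulo $5$ lies in $\{1,2,3,4\}$. If $p \equiv \pm 1 \pmod{5}$, i.e., $p \equiv 1$ or $4 \pmod{5}$, then $\left(\frac{p}{5}\right) = 1$, whence $\left(\frac{5}{p}\right)=1$ and $5$ is a quadratic residue mod $p$. If $p \equiv \pm 2 \pmod{5}$, i.e., $p \equiv 2$ or $3 \pmod{5}$, then $\left(\frac{p}{5}\right) = -1$, and $5$ is a nonresidue mod $p$.

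The two cases together cover all primes $p \neq 2, 5$ (since any such prime has residue $\pm 1$ or $\pm 2$ mod $5$), giving the claimed dichotomy. There is no real obstacle here beyond invoking quadratic reciprocity correctly; the only mild care needed is to verify that the reciprocity sign vanishes because $5 \equiv 1 \pmod 4$, which makes the supplementary $(-1)$-factor trivial regardless of the parity of $\frac{p-1}{2}$.
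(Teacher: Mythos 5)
Your proof is correct: since $5\equiv 1\pmod 4$, quadratic reciprocity gives $\left(\frac{5}{p}\right)=\left(\frac{p}{5}\right)$, and the residues mod $5$ are exactly $\{1,4\}$, which yields the stated dichotomy. The paper offers no proof of its own (the lemma is quoted from Burton), and your argument is precisely the standard reciprocity computation found there, so there is nothing to reconcile.
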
%
\begin{lemma}
\cite{vajda} \label{vajda1} If a prime $p$ is of the form $5t\pm 1$ then $%
F_{p-1}\equiv 0$ and $F_{p}\equiv 1\mod p$. If a prime p is of the form $%
5t\pm 2,$ then $F_{p}\equiv -1$ and $F_{p+1}\equiv 0\mod p$.
\end{lemma}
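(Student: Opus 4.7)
The plan is to prove the lemma via Binet's formula for the Fibonacci numbers, lifted into the appropriate finite field depending on whether $5$ is a quadratic residue modulo $p$. Recall that, over a ring containing $\sqrt{5}$ and $1/2$, one has
\begin{equation*}
F_n = \frac{\alpha^n - \beta^n}{\alpha-\beta}, \qquad \alpha = \frac{1+\sqrt{5}}{2},\ \beta = \frac{1-\sqrt{5}}{2},
\end{equation*}
so the task reduces to computing $\alpha^p,\beta^p,\alpha^{p-1},\beta^{p-1}$ modulo $p$. Note that $\alpha-\beta=\sqrt{5}$ and $\alpha\beta=-1$, which will be the workhorse identities.

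For the first case, suppose $p\equiv\pm1\pmod{10}$ (equivalently $\pm1\pmod5$). By the preceding quadratic-residue lemma, $5$ is a square in $\mathbb{F}_p$; fix $s\in\mathbb{F}_p$ with $s^2=5$ and set $\alpha,\beta\in\mathbb{F}_p$ by the formulas above using $s$ in place of $\sqrt{5}$. Both $\alpha$ and $\beta$ are nonzero elements of $\mathbb{F}_p$ (since $\alpha\beta=-1$), so Fermat's little theorem gives $\alpha^{p-1}\equiv\beta^{p-1}\equiv1\pmod{p}$, whence
\begin{equation*}
F_{p-1} \equiv \frac{1-1}{s} \equiv 0 \pmod p,\qquad F_p \equiv \frac{\alpha-\beta}{s} \equiv \frac{s}{s} \equiv 1 \pmod p.
\end{equation*}

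For the second case, suppose $p\equiv\pm3\pmod{10}$, so $5$ is a non-square mod $p$. Then $\sqrt{5}$ lies in the quadratic extension $\mathbb{F}_{p^2}$, and $\alpha,\beta\in\mathbb{F}_{p^2}\setminus\mathbb{F}_p$. Here I would invoke the Frobenius endomorphism $x\mapsto x^p$: in characteristic $p$ one has $(1\pm\sqrt{5})^p \equiv 1 \pm (\sqrt{5})^p \pmod p$, and by Euler's criterion $(\sqrt{5})^p = 5^{(p-1)/2}\sqrt{5} = -\sqrt{5}$ because $5$ is a non-residue. Combined with $2^p\equiv 2$, this yields $\alpha^p\equiv\beta$ and $\beta^p\equiv\alpha$ in $\mathbb{F}_{p^2}$. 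Consequently
\begin{equation*}
F_p \equiv \frac{\beta-\alpha}{\sqrt5} \equiv -1\pmod p,\qquad F_{p+1} \equiv \frac{\alpha^p\alpha - \beta^p\beta}{\sqrt5} \equiv \frac{\alpha\beta-\alpha\beta}{\sqrt5} \equiv 0\pmod p.
\end{equation*}
Since these values lie in $\mathbb{F}_p$, the congruences are genuine congruences of integers mod $p$.

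The main technical point to take care of is the legitimacy of Binet's formula in characteristic $p$: one must justify working in $\mathbb{F}_{p^2}$ for the non-residue case and verify that the Frobenius indeed swaps $\alpha$ and $\beta$ (this is where the hypothesis $p\equiv\pm2\pmod5$ is genuinely used, via the preceding quadratic-residue lemma and Euler's criterion). Everything else is a short computation once the Frobenius action on $\sqrt{5}$ is pinned down; the case $p=2$ is excluded since $2\equiv\pm2\pmod5$ would require $1/2$, but $p=2$ is anyway outside the hypotheses as $2\not\equiv\pm1,\pm2\pmod 5$ is false only because of $p=2$ itself, and a direct check of $F_2=1\equiv-1$, $F_3=2\equiv0\pmod2$ handles it trivially.
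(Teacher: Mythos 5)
The paper does not prove this lemma at all: it is quoted from Vajda's book as a known fact, so there is no in-paper argument to compare yours against. Your proof is the standard one and it is correct: in the residue case you place $\alpha,\beta$ in $\mathbb{F}_p$ and apply Fermat; in the non-residue case you pass to $\mathbb{F}_{p^2}$ and use the freshman's-dream/Euler's-criterion computation $(\sqrt{5})^p=5^{(p-1)/2}\sqrt5=-\sqrt5$ to see that Frobenius swaps $\alpha$ and $\beta$, which immediately gives $F_p\equiv-1$ and $F_{p+1}\equiv(\alpha\beta-\alpha\beta)/\sqrt5\equiv0$. All the identities you rely on ($\alpha-\beta=\sqrt5\neq0$ since $p\neq5$, $\alpha\beta=-1$, $2^p\equiv2$) are used correctly. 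The only blemish is the closing sentence about $p=2$: it is garbled as written ($p=2$ is of the form $5t+2$, so it \emph{is} covered by the hypothesis, and Binet with $1/2$ is unavailable there), but you do supply the needed direct verification $F_2=1\equiv-1$ and $F_3=2\equiv0\pmod 2$, so the proof is complete; just restate that sentence as ``$p=2$ must be checked by hand, and it is.''
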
Let $\alpha \left( p\right) $ denote the index of the subscript
of the first nonzero term of the Fibonacci sequence which is divisible by $p$%
. Let $s\left( p\right) $ be the least residue of $F_{\alpha \left( p\right)
+1}\mod
p$ and let $\beta \left( p\right) $ denote the order of $s\left( p\right) $
modulo $p$ i.e. the smallest positive integer $\beta \left( p\right) $ such
that $s\left( p\right) ^{\beta \left( p\right) }\equiv 1\mod p.$
\begin{table}
\caption{The table indicates the connection between Pisano period $\left(
l_{p}\right) $, the index of the first term of the Fibonacci sequence which
is zero $\left( \protect\alpha \left( p\right) \right) $, the least residue
of sequence $\left( s\left( p\right) \right) $, and the number of zeros in a
single period of Fibonacci sequence computed in $\mathbb{F}_{p}$ (or
equivalently the order of $s\left( p\right) ,\left( \protect\beta \left(
p\right) \right) $ modulo prime $p$).}
\label{tab:1}
\begin{center}
\begin{tabular}{cccccl}
\hline
$p$ & $l_{p}$ & $\alpha \left( p\right) $ & $s\left( p\right) $ & $\beta
\left( p\right) $ & The corresponding Fibonacci sequence \\ \hline
$7$ & $16$ & $8$ & $6$ & $2$ & $\left\{
0,1,1,2,3,5,1,6,0,6,6,5,4,2,6,1\right\} $ \\
$11$ & $10$ & $10$ & $1$ & $1$ & $\left\{ 0,1,1,2,3,5,8,2,10,1\right\} $ \\
$13$ & $28$ & $7$ & $8$ & $4$ & $\left\{ 0,1,1,2,3,5,8,0,8,8,\ldots
,2,12,1\right\} $ \\
$17$ & $36$ & $9$ & $4$ & $4$ & $\left\{ 0,1,1,2,3,5,8,13,4,0,4,\ldots
,2,16,1\right\} $ \\
$19$ & $18$ & $18$ & $1$ & $1$ & $\left\{
0,1,1,2,3,5,8,13,2,15,17,13,11,5,16,2,18,1\right\} $ \\
$23$ & $48$ & $24$ & $22$ & $2$ & $\left\{ 0,1,1,2,3,5,8,13,21,\ldots
,2,22,1\right\} $ \\ \hline
\end{tabular}%
\end{center}
\end{table}

\begin{theorem}
\cite{rob} \label{rob1} $l_{p}=\alpha \left( p\right) \beta \left( p\right).$
\end{theorem}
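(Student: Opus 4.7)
My plan is to exploit a self-similar structure in the Fibonacci sequence modulo $p$: the block of terms immediately following any zero is a scalar multiple of the initial block, with scalar $s(p)$. Iterating this rescaling reduces the computation of the Pisano period to the multiplicative order of $s(p)$, which is precisely $\beta(p)$.

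The first step is to establish, by induction on $n$, the identity
\[
F_{\alpha(p)+n} \equiv s(p)\, F_n \pmod{p} \qquad (n \geq 0).
\]
The base cases $n=0$ and $n=1$ are the definitions of $\alpha(p)$ and $s(p)$, while the inductive step uses the linearity of the Fibonacci recurrence, since $F_{\alpha(p)+n} = F_{\alpha(p)+n-1} + F_{\alpha(p)+n-2} \equiv s(p) F_{n-1} + s(p) F_{n-2} = s(p) F_n \pmod{p}$. A second induction on $k$ then extends this to
\[
F_{k\alpha(p)+n} \equiv s(p)^k\, F_n \pmod{p} \qquad (k, n \geq 0).
\]

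Next I would re-express the defining conditions of $l_p$ via this identity. By definition $l_p$ is the least $t>0$ with $F_t \equiv 0$ and $F_{t+1} \equiv 1 \pmod{p}$. Note first that $s(p) \not\equiv 0 \pmod p$: otherwise the two consecutive terms $F_{\alpha(p)}$ and $F_{\alpha(p)+1}$ would both vanish, and running the recurrence $F_n = F_{n+2} - F_{n+1}$ backwards would force $F_1 \equiv 0 \pmod p$, contradicting $F_1 = 1$. Because $s(p)$ is therefore a unit modulo $p$, the scaling identity shows that $F_m \equiv 0 \pmod p$ if and only if $\alpha(p) \mid m$; any zero inside a block $0 \leq n < \alpha(p)$ other than $n=0$ would contradict the minimality of $\alpha(p)$. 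Hence $F_t \equiv 0$ forces $t = k\alpha(p)$, and the condition $F_{t+1} \equiv 1 \pmod p$ becomes $s(p)^k \cdot F_1 \equiv 1$, that is, $s(p)^k \equiv 1 \pmod p$. The smallest positive such $k$ is by definition $\beta(p)$, so $l_p = \alpha(p)\beta(p)$.

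The main obstacle, albeit a mild one, is the zero-pattern bookkeeping in the last step: one must justify both that $s(p)$ is a unit and that no zero of the sequence hides strictly between consecutive multiples of $\alpha(p)$ before one is entitled to write $t = k\alpha(p)$. Once the scaling identity is in hand, however, both facts drop out immediately from the minimality of $\alpha(p)$ and the nonvanishing of $F_1$, as sketched above; everything else is mechanical.
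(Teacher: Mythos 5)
Your proof is correct. Note that the paper does not prove this theorem at all---it is quoted from Robinson's paper \cite{rob}---so there is no in-paper argument to compare against. Your scaling identity $F_{\alpha(p)+n}\equiv s(p)\,F_n \pmod{p}$ is exactly the sequence-level form of the standard matrix argument (the Fibonacci matrix $Q$ satisfies $Q^{\alpha(p)}\equiv s(p)I \pmod{p}$, so $Q^t\equiv I$ iff $\alpha(p)\mid t$ and $s(p)^{t/\alpha(p)}\equiv 1$), and you correctly handle the two points that are easy to gloss over, namely that $s(p)$ is a unit and that the sequence has no zeros strictly between consecutive multiples of $\alpha(p)$.
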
As a result of Theorem \ref{rob1}, $\beta \left( p\right) $ can
be considered as the number of zeros in a single period of Fibonacci
sequence computed in $\mathbb{F}_{p}$.
\begin{theorem}
\cite{rob} \label{rob2} $l_{p}=\gcd \left( 2,\beta \left( p\right) \right) .%
\text{ lcm } \left[ \alpha \left( p\right) ,\gamma \left( p\right) \right] $%
, where $\gamma \left( 2\right) =1$ and $\gamma \left( p\right) =2$ for $%
p>2. $
\end{theorem}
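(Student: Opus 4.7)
The plan is to derive the formula from Theorem \ref{rob1} via a careful analysis of the possible values of $\beta(p)$. Since Theorem \ref{rob1} already gives $l_p = \alpha(p)\beta(p)$, and the right-hand side of the claimed identity depends only on the parity of $\alpha(p)$ (through $\mathrm{lcm}[\alpha(p), \gamma(p)]$) and on whether $\beta(p)$ is even (through $\gcd(2, \beta(p))$), the task reduces to pinning down $\beta(p)$ in terms of the parity of $\alpha(p)$.

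The key step I would establish is the identity $s(p)^{2} \equiv (-1)^{\alpha(p)} \pmod{p}$. This will follow by applying Cassini's identity $F_{n-1}F_{n+1} - F_n^{2} = (-1)^{n}$ at $n = \alpha(p)$, using $F_{\alpha(p)} \equiv 0 \pmod{p}$ together with the recurrence consequence $F_{\alpha(p)+1} \equiv F_{\alpha(p)-1} \pmod{p}$. This algebraic identity is the crux of the argument: it pins $s(p)^{2}$ to $\pm 1$ modulo $p$ according to the parity of $\alpha(p)$, and that is exactly the invariant the right-hand side of the formula is built to detect.

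From here I would do a case split for $p > 2$. If $\alpha(p)$ is even, then $s(p)^{2} \equiv 1 \pmod{p}$, so $s(p) \equiv \pm 1$ and hence $\beta(p) \in \{1, 2\}$; in either subcase one checks $\gcd(2,\beta(p)) \cdot \mathrm{lcm}[\alpha(p), 2] = \beta(p)\cdot \alpha(p) = l_p$. If $\alpha(p)$ is odd, then $s(p)^{2} \equiv -1 \pmod{p}$ forces $\beta(p) = 4$, so $\gcd(2, 4) \cdot \mathrm{lcm}[\alpha(p), 2] = 2 \cdot 2\alpha(p) = 4\alpha(p) = l_p$. The edge case $p = 2$ is handled separately using the convention $\gamma(2) = 1$ and a direct verification from $l_2 = 3$, $\alpha(2) = 3$, $\beta(2) = 1$.

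The main obstacle I anticipate is recognizing that Cassini's identity, combined with the definitional relations $F_{\alpha(p)} \equiv 0 \pmod{p}$ and $s(p) \equiv F_{\alpha(p)+1} \pmod{p}$, is precisely what restricts $\beta(p)$ to the set $\{1, 2, 4\}$ and ties its value to the parity of $\alpha(p)$. Once this algebraic constraint is in hand, matching the three resulting subcases to the factor $\gcd(2, \beta(p)) \cdot \mathrm{lcm}[\alpha(p), \gamma(p)]$ is mechanical. The table of small primes in the excerpt is consistent with this trichotomy: $p = 11, 19$ illustrate $\beta = 1$, $p = 7, 23$ illustrate $\beta = 2$, and $p = 13, 17$ illustrate $\beta = 4$ (the odd-$\alpha$ case).
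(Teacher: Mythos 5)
Your proposal is correct. Note, however, that the paper itself offers no proof of this statement: it is quoted verbatim from Robinson's 1963 paper \cite{rob}, so there is no ``paper's argument'' to compare against line by line. Your derivation is the standard (essentially Robinson's) one, and it is complete: Cassini's identity $F_{n-1}F_{n+1}-F_n^2=(-1)^n$ evaluated at $n=\alpha(p)$, together with $F_{\alpha(p)}\equiv 0$ and $F_{\alpha(p)+1}\equiv F_{\alpha(p)-1}\pmod p$, gives $s(p)^2\equiv(-1)^{\alpha(p)}\pmod p$, which is exactly the invariant needed. The case analysis is exhaustive for $p>2$ (even $\alpha$ forces $\beta\in\{1,2\}$, odd $\alpha$ forces $s(p)^2\equiv-1$ and hence $\beta=4$ since $-1\not\equiv 1$), each branch reduces via Theorem~\ref{rob1} to a mechanical check of $\gcd(2,\beta(p))\cdot\mathrm{lcm}[\alpha(p),2]=\alpha(p)\beta(p)$, and the $p=2$ case ($l_2=3$, $\alpha(2)=3$, $\beta(2)=1$, $\gamma(2)=1$) is verified directly. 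A pleasant side effect worth making explicit: your argument simultaneously proves Corollary~\ref{rob3}(2), that $\beta(p)\in\{1,2,4\}$, which the paper also only cites; and combined with $\gamma(p)=2$ it yields Corollary~\ref{rob3}(1) as well. The only presentational gap is that Cassini's identity itself is invoked without proof, but it follows by a one-line induction (or from $\det\begin{pmatrix}1&1\\1&0\end{pmatrix}^n=(-1)^n$), so this is not a substantive omission.
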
%
\begin{corollary}
\label{rob3} \cite{rob}
\begin{enumerate}
\item $l_{p}$ is even for $p>2.$
\item $\beta \left( p\right) =1,2,$ or $4.$
\end{enumerate}
\end{corollary}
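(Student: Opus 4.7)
The plan is to read part (1) off Theorem \ref{rob2} immediately. For $p>2$ we have $\gamma(p)=2$, so $\mathrm{lcm}[\alpha(p),\gamma(p)]=\mathrm{lcm}[\alpha(p),2]$ is automatically even; multiplying by $\gcd(2,\beta(p))\ge 1$ preserves parity, so $l_p$ is even. No case analysis on $\beta(p)$ is required here, which is why it is natural to dispatch (1) before (2).

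For part (2), the key tool I would use is Cassini's identity $F_{n-1}F_{n+1}-F_n^2=(-1)^n$, which is an integer identity and therefore descends to $\mathbb{F}_p$. Evaluating at $n=\alpha(p)$, the term $F_n^2$ vanishes modulo $p$, and the Fibonacci recurrence $F_{\alpha(p)+1}=F_{\alpha(p)}+F_{\alpha(p)-1}$ combined with $F_{\alpha(p)}\equiv 0\pmod p$ yields $F_{\alpha(p)-1}\equiv F_{\alpha(p)+1}\equiv s(p)\pmod p$. Substituting back into Cassini gives the single clean congruence
\[
s(p)^{2}\equiv (-1)^{\alpha(p)}\pmod{p}.
\]

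From here I would split on the parity of $\alpha(p)$. If $\alpha(p)$ is even, then $s(p)^{2}\equiv 1\pmod p$, so the order $\beta(p)$ of $s(p)$ divides $2$, giving $\beta(p)\in\{1,2\}$. If $\alpha(p)$ is odd, then $s(p)^{2}\equiv -1\pmod p$; for $p>2$ we have $-1\not\equiv 1$, so $\beta(p)$ does not divide $2$, while $s(p)^{4}\equiv 1$ forces $\beta(p)\mid 4$, pinning $\beta(p)=4$. The outlier $p=2$ can be checked by hand ($\alpha(2)=3$, $s(2)=1$, $\beta(2)=1$), and fits the conclusion.

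I do not expect a serious obstacle, since once Cassini has been transported to $\mathbb{F}_p$ the argument is a short order calculation in $\mathbb{F}_p^{\times}$. The only point that needs care is the identification $F_{\alpha(p)-1}\equiv s(p)\pmod p$, which hinges on using the recurrence together with the definition $F_{\alpha(p)}\equiv 0\pmod p$; everything else is mechanical.
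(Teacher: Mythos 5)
Your proof is correct. Note that the paper itself offers no proof of this corollary: it is simply quoted from Robinson's paper, with part (1) implicitly read off the immediately preceding Theorem~\ref{rob2}. Your derivation of part (1) is exactly that reading: for $p>2$ one has $\gamma(p)=2$, so $\mathrm{lcm}[\alpha(p),2]$ is even and the factor $\gcd(2,\beta(p))\geq 1$ cannot destroy parity. For part (2) you supply a genuine, self-contained argument that the paper omits. The key congruence
\[
s(p)^{2}\equiv(-1)^{\alpha(p)}\pmod{p}
\]
is obtained correctly: Cassini's identity $F_{n-1}F_{n+1}-F_{n}^{2}=(-1)^{n}$ holds over $\mathbb{Z}$ and hence modulo $p$, and at $n=\alpha(p)$ the recurrence together with $F_{\alpha(p)}\equiv 0$ gives $F_{\alpha(p)-1}\equiv F_{\alpha(p)+1}\equiv s(p)$. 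The parity split then pins $\beta(p)\in\{1,2\}$ or $\beta(p)=4$ exactly as you say (for odd $\alpha(p)$ and $p>2$, $s(p)^2\equiv -1\neq 1$ rules out orders $1$ and $2$, while $s(p)^4\equiv 1$ forces order $4$), and the check of $p=2$ closes the remaining case. This is essentially Robinson's original argument (his version uses $\det(Q^{n})=(-1)^{n}$ for the Fibonacci matrix $Q$, which is Cassini in disguise), so what your write-up buys is a complete and elementary justification of a fact the paper leaves as a citation.
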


\subsection{Error Correcting Code Basics}

In this subsection we present some basic theory about linear codes
especially cyclic codes. For further and more detailed information regarding
this topic the readers may refer to \cite{Ling}. Let $\mathbb{F}_{p}$ be a
finite field with $p$ elements where $p$ is prime and $\mathbb{F}_{p}^{n}$
be an $n$ dimensional vector space. A linear code $C$ of length $n$ over $%
\mathbb{F}_{p}$ is a subspace of $\mathbb{F}_{p}^{n}.$ A subset $S$ of $%
\mathbb{F}_{p}^{n}$ is cyclic if
\begin{equation*}
(a_{n-1},a_{0},a_{1},\ldots ,a_{n-2})\in S\text{ whenever }%
(a_{0},a_{1},\ldots ,a_{n-1})\in S.
\end{equation*}%
A linear code $C$ is called a cyclic code if $C$ is also a cyclic set.
\begin{definition}
\cite{Ling} Let $\alpha $ be a primitive element of $\mathbb{F}_{p^{m}}$ and
denote by $M^{(i)}(x)$ the minimal polynomial of $\alpha ^{i}$ with respect
to $\mathbb{F}_{p}.$ A (primitive) $BCH$ code over $\mathbb{F}_{p}$ of
length $n=p^{m}-1$ with designed distance $\delta $\ is a $p-$ary cyclic
code generated by
\[
g(x):=lcm(M^{(a)}(x),M^{(a+1)}(x),...,M^{(a+\delta -2)}(x))
\]
for some integer $a.$ Furthermore, the code is called narrow-sense if $a=1.$
\end{definition}%
\begin{definition}
\cite{Ling} A $p-$ary Reed Solomon code ($RS$ code) is a $p-$ary $BCH$ code
of length $p-1$ generated by
\[
g(x)=(x-\alpha ^{a+1})(x-\alpha ^{a+2})\ldots (x-\alpha ^{a+\delta -1}),
\]%
with $a\geq 0$ and $2\leq \delta \leq p-1$, where $\alpha $ is a primitive
element of $\mathbb{F}_{p}$.
\end{definition}In order to relate the combinatorial structure of cyclic
codes with algebraic structures, the following map $\varphi $ is defined as
\begin{eqnarray}
\varphi &:&\mathbb{F}_{p}^{n}\rightarrow \mathbb{F}_{p}[x]/(x^{n}-1)  \notag
\\
\varphi \left( a_{0},a_{1},\ldots ,a_{n-2},a_{n-1}\right)
&=&a_{0}+a_{1}x+\ldots +a_{n-2}x^{n-2}+a_{n-1}x^{n-1}.  \label{A}
\end{eqnarray}%
$\mathbb{F}_{p}[x]/(x^{n}-1)$ is a principal ideal rings and $\varphi $
corresponds each subspace of $\mathbb{F}_{p}^{n}$ to an ideal in $\mathbb{F}%
_{p}[x]/(x^{n}-1).$ The following theorem states this connection:
\begin{theorem}
\cite{Ling} Let $\varphi $ be the linear map defined in the Equation %
\eqref{A}. Then a nonempty subset $C$ of $\mathbb{F}_{p}^{n}$ is a cyclic
code if and only if $\varphi (C)$ is an ideal of $\mathbb{F}%
_{p}[x]/(x^{n}-1) $.
\end{theorem}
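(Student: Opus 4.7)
The plan is to prove both directions by exploiting a single algebraic observation: under the correspondence $\varphi$, the cyclic shift on $\mathbb{F}_{p}^{n}$ translates into multiplication by $x$ in the quotient ring $\mathbb{F}_{p}[x]/(x^{n}-1)$. Concretely, if $(a_{0},a_{1},\ldots,a_{n-1})$ corresponds to $a_{0}+a_{1}x+\cdots+a_{n-1}x^{n-1}$, then multiplying this polynomial by $x$ and reducing modulo $x^{n}-1$ produces $a_{n-1}+a_{0}x+\cdots+a_{n-2}x^{n-1}$, which is exactly the image under $\varphi$ of the cyclic shift of the original vector. This one computation is the engine of the entire argument.

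For the forward implication, I would assume that $C$ is a cyclic code. Since $C$ is by definition a linear subspace and $\varphi$ is an $\mathbb{F}_{p}$-linear bijection, $\varphi(C)$ is automatically an $\mathbb{F}_{p}$-subspace of $\mathbb{F}_{p}[x]/(x^{n}-1)$. To upgrade this to an ideal it is enough to verify closure under multiplication by $x$: given any $c(x)\in \varphi(C)$, the element $x\cdot c(x)$ corresponds via $\varphi$ to the cyclic shift of $\varphi^{-1}(c(x))$, which lies in $C$ by the cyclic hypothesis, so $x\cdot c(x)\in \varphi(C)$. Iterating gives $x^{i}\cdot c(x)\in\varphi(C)$ for every $i\ge 0$, and then $\mathbb{F}_{p}$-linearity upgrades this to $f(x)\cdot c(x)\in \varphi(C)$ for every $f(x)\in \mathbb{F}_{p}[x]/(x^{n}-1)$.

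For the reverse implication, assume $\varphi(C)$ is an ideal. Then in particular it is an additive subgroup closed under $\mathbb{F}_{p}$-scaling, so $C=\varphi^{-1}(\varphi(C))$ is an $\mathbb{F}_{p}$-subspace of $\mathbb{F}_{p}^{n}$, i.e.\ $C$ is linear. Furthermore, since ideals are closed under multiplication by $x$, for any $(a_{0},\ldots,a_{n-1})\in C$ the polynomial $x\cdot \varphi(a_{0},\ldots,a_{n-1})$ lies in $\varphi(C)$, and by the shift-multiplication correspondence its $\varphi$-preimage is the cyclic shift $(a_{n-1},a_{0},\ldots,a_{n-2})$; hence $C$ is cyclic.

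There is no serious obstacle in this proof; it is essentially a dictionary between a combinatorial operation (cyclic shift) and an algebraic one (multiplication by $x$). The only point worth handling carefully is that one never needs to check closure under multiplication by arbitrary polynomials by hand beyond the case $f(x)=x$, because $\mathbb{F}_{p}[x]/(x^{n}-1)$ is generated as an $\mathbb{F}_{p}$-algebra by $1$ and $x$, so linearity plus closure under multiplication by $x$ already yields the full ideal property.
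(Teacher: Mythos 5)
Your proof is correct: the identification of multiplication by $x$ in $\mathbb{F}_{p}[x]/(x^{n}-1)$ with the cyclic shift, together with the observation that the quotient ring is generated over $\mathbb{F}_{p}$ by $x$, gives both directions exactly as you describe. The paper itself offers no proof (it cites \cite{Ling} for this standard fact), and your argument is precisely the textbook proof being referenced, so there is nothing further to reconcile.
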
%
\begin{corollary}
\label{c2} \cite{Ling} The nonempty subset $C$ of $\mathbb{F}_{p}^{n}$ is a
cyclic code of length $n$ if and only if $g(x)|x^{n}-1$ and $\varphi
(C)=\left\langle g(x)\right\rangle $.
\end{corollary}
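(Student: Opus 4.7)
The plan is to deduce the corollary directly from the preceding theorem, which identifies cyclic codes with ideals of $R := \mathbb{F}_p[x]/(x^n-1)$, together with the fact (already noted in the paragraph before the theorem) that $R$ is a principal ideal ring.

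First I would handle the backward direction, which is the easy half. If $\varphi(C) = \langle g(x)\rangle$ for some $g(x) \in R$, then $\varphi(C)$ is visibly an ideal of $R$, and so by the preceding theorem $C$ is a cyclic code of length $n$. The divisibility hypothesis $g(x) \mid x^n - 1$ is not even needed in this direction; it only plays a role in characterizing which $g(x)$ can arise.

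For the forward direction I would proceed in two steps. First, assuming $C$ is a cyclic code of length $n$, the preceding theorem gives that $\varphi(C)$ is an ideal in $R$. Since $\mathbb{F}_p[x]$ is a PID and ideals of $R$ correspond to ideals of $\mathbb{F}_p[x]$ containing $x^n-1$, every ideal of $R$ is principal; hence $\varphi(C) = \langle g(x)\rangle$ for some $g(x)$, which I would take to be the monic polynomial of least degree whose coset lies in $\varphi(C)$. Second, to show $g(x) \mid x^n-1$ in $\mathbb{F}_p[x]$, I would invoke the division algorithm to write $x^n - 1 = q(x)g(x) + r(x)$ with $\deg r(x) < \deg g(x)$. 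Reducing modulo $x^n - 1$ gives $r(x) \equiv -q(x)g(x) \pmod{x^n-1}$, so the coset of $r(x)$ lies in $\langle g(x)\rangle = \varphi(C)$. Since $r(x)$ has degree strictly less than $\deg g(x)$, minimality of $g(x)$ forces $r(x) = 0$, i.e. $g(x) \mid x^n - 1$.

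No step here is a genuine obstacle; the corollary is essentially a repackaging of the PID/quotient-ring structure. The only subtle point is justifying the choice of a monic minimal-degree generator, and making sure to distinguish between working in $\mathbb{F}_p[x]$ and in the quotient $R$ when applying the division algorithm — the divisibility statement $g(x)\mid x^n-1$ must be interpreted in $\mathbb{F}_p[x]$, not in $R$ (where it would be trivial).
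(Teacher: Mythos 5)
Your argument is correct and is exactly the standard proof of this fact: the paper itself gives no proof (it cites Ling--Xing), and the cited source argues precisely as you do, via the ideal correspondence, a monic minimal-degree generator, and the division algorithm in $\mathbb{F}_p[x]$ to get $g(x)\mid x^n-1$. The only point you might add for completeness is the degenerate case $C=\{0\}$, where one takes $g(x)=x^n-1$ by convention.
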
%
\begin{example}
\label{B} The code $C=\left\{ {000,111,222}\right\} $ is a ternary cyclic
code. The corresponding ideal in $\mathbb{F}_{3}[x]/(x^{3}-1)$ is $\varphi
(C)=\left\{ 0,1+x+x^{2},2+2x+2x^{2}\right\} =\left\langle {1+x+x^{2}}%
\right\rangle .$
\end{example}Since the minimum distance determines the error correction and
detection capability of a code, it is an important parameter for codes, and
also determining it is a very difficult problem. There are at least some
bounds that help on estimating the minimum distance of a code. Now, we
present definitions and theorems regarding some special bounds that are
going to be referred in the sequel.
\begin{definition}
\cite{Ling} [Singleton bound] \label{single} If $C$ is a linear code with
parameters $[n, k, d],$ then $k\leq n-d+1.$
\end{definition}%
\begin{definition}
\cite{Ling} \label{mds} A linear code with parameters $[n, k, d]$ such that $%
k+d = n+1$ is called a maximum distance separable (MDS) code.
\end{definition}The code presented in Example \ref{B} is MDS.
\begin{theorem}[Griesmer Bound]
\cite{Ling} Let $C$ be a $p-$ary code of parameters $[n,k,d]$, where $k\geq
1 $. Then%
\[
n\geq \sum\limits_{i=0}^{k-1}\left\lceil \frac{d}{p^{i}}\right\rceil .
\]%
Here, if $\alpha $ is a real number, then $\left\lceil \alpha \right\rceil $
denotes the smallest integer larger or equal to (the ceil) $\alpha.$
\end{theorem}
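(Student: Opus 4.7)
The plan is to prove the Griesmer bound by induction on the dimension $k$, using the classical \emph{residual code} construction. The base case $k=1$ is immediate, since a one-dimensional code contains a nonzero codeword of weight $d$, forcing $n \geq d = \lceil d/p^{0} \rceil$.

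For the inductive step, assume the bound holds for every $p$-ary linear code of dimension $k-1$. Let $C$ be an $[n,k,d]_p$-code and fix a codeword $c \in C$ of minimum weight $d$, whose support I call $S \subseteq \{1,\dots,n\}$ with $|S|=d$. Define the residual code $C' := \{\, c'|_{\{1,\dots,n\}\setminus S} : c' \in C\,\}$ obtained by puncturing on $S$. It is routine that $C'$ is a linear code of length $n-d$ and dimension exactly $k-1$ (the kernel of the puncturing map, restricted to $C$, is spanned by $c$ since any codeword vanishing outside $S$ that is not a multiple of $c$ would contradict the fact that $c$ has minimum weight).

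The key step, which I regard as the main (though standard) obstacle, is to show that $d(C') \geq \lceil d/p \rceil$. Let $c' \in C$ not be a scalar multiple of $c$ and let $b$ denote the weight of $c'$ outside $S$; I want $b \geq \lceil d/p \rceil$. Consider the family of codewords $c' + \alpha c$ for $\alpha \in \mathbb{F}_p$. None of these is zero (else $c'$ would be a scalar multiple of $c$), so each has weight at least $d$. Summing the weights over all $p$ values of $\alpha$:
\begin{itemize}
\item For each coordinate $i \in S$, the entry $c'_i + \alpha c_i$ vanishes for exactly one $\alpha$, contributing $p-1$ nonzero values across the sum; summed over $S$ this gives $d(p-1)$.
\item For each coordinate $i \notin S$, the entry equals $c'_i$ for every $\alpha$, contributing $pb$ in total.
\end{itemize}
Therefore $d(p-1) + pb \geq p\cdot d$, yielding $pb \geq d$ and hence $b \geq \lceil d/p \rceil$ since $b$ is an integer.

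Finally, by the induction hypothesis applied to $C'$, which has parameters $[n-d,\,k-1,\,d']$ with $d' \geq \lceil d/p \rceil$,
\[
n - d \;\geq\; \sum_{i=0}^{k-2} \left\lceil \frac{d'}{p^{i}} \right\rceil \;\geq\; \sum_{i=0}^{k-2} \left\lceil \frac{\lceil d/p\rceil}{p^{i}} \right\rceil \;=\; \sum_{i=0}^{k-2} \left\lceil \frac{d}{p^{i+1}} \right\rceil \;=\; \sum_{i=1}^{k-1} \left\lceil \frac{d}{p^{i}} \right\rceil,
\]
where I use the elementary identity $\lceil \lceil x/a\rceil/b\rceil = \lceil x/(ab)\rceil$ for positive integers $a,b$. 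Adding $d = \lceil d/p^0\rceil$ to both sides completes the induction and gives the stated bound.
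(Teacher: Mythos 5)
Your proof is correct. The paper does not actually prove this theorem---it is stated as a quoted result from the cited textbook of Ling and Xing---so there is no in-paper argument to compare against; your residual-code induction is the standard proof of the Griesmer bound and everything essential is present: the base case, the dimension count for the punctured code, the averaging argument over the line $\{c'+\alpha c:\alpha\in\mathbb{F}_p\}$ giving $d(p-1)+pb\geq pd$ and hence $d(C')\geq\lceil d/p\rceil$, and the ceiling identity $\bigl\lceil\lceil d/p\rceil/p^{i}\bigr\rceil=\lceil d/p^{i+1}\rceil$ (valid here because $d$ is an integer). The only spot that is slightly terser than it should be is the claim that the kernel of the puncturing map is spanned by $c$: if $c''$ is a codeword supported inside $S$ and not a multiple of $c$, the contradiction is not immediate from $c$ having minimum weight (such a $c''$ could itself have weight exactly $d$); one should subtract from $c''$ the multiple of $c$ that annihilates one fixed coordinate of $S$, obtaining a nonzero codeword of weight at most $d-1$. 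That is a one-line repair of a step you yourself flagged as routine, so the proof stands.
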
%
\begin{example}
\label{C} Let $C$ be a cyclic linear code generated by the polynomial $%
g\left( x\right)
=x^{14}+6x^{13}+2x^{12}+4x^{11}+5x^{10}+6x^{9}+6x^{8}+6x^{6}+x^{5}+5x^{4}+3x^{3}+2x^{2}+x+1
$ over $\mathbb{F}_{7}[x]/(x^{16}-1)$. The code $C$ has parameters $%
[16,2,14]_{7}$. Thus, we have $16\geq \sum\limits_{i=0}^{1}\left\lceil \frac{%
14}{7^{i}}\right\rceil =14+2=16.$ So, the given code meets the Griesmer
bound.
\end{example}

\section{Cyclic Codes Obtained from Fibonacci Polynomials}

In this section we study cyclic codes that are generated by polynomials
related to Fibonacci sequences.
\begin{definition}
\cite{Alg} Let $S=\left\{ s_{0},s_{1},\ldots \right\} $ be an arbitrary
sequence over $\mathbb{F}_{p}.$ Assume that $s_{i+n}+c_{n-1}s_{i+n-1}+\ldots
c_{0}s_{i}=0$ for some elements $c_{0},c_{1},\ldots ,c_{n-1}\in \mathbb{F}%
_{p}$ and for all $i=0,1,\ldots .$ Then the polynomial $%
x^{n}+c_{n-1}x^{n-1}+\ldots +c_{1}x+c_{0}$ is called the characteristic
polynomial of $S.$ A characteristic polynomial of minimal degree is called
the minimal polynomial of $S.$
\end{definition}
For a periodic sequence $S=\left\{ s_{0},s_{1},\ldots \right\} $ with a
period $N$ we have $s_{i+N}-s_{i}=0,$ so $x^{N}-1$ is a characteristic
polynomial of $S.$ If $N$ is a period of $S,$ then the minimal polynomial of
$S$ is%
\begin{equation*}
\frac{x^{N}-1}{\gcd \left( x^{N}-1,s_{N-1}x^{N-1}+\ldots
+s_{1}x+s_{0}\right) }.
\end{equation*}
Let $F=\left\{ F_{0},F_{1},\ldots ,F_{n},\ldots \right\} $ denote the
Fibonacci sequence over $\mathbb{F}_{p}$ and suppose that the period of this
sequence is equal to $l.$ The polynomial $f(x)=\sum_{i=0}^{l-1}F_{i}x^{i}\in
\mathbb{F}_{p}[x]$ is called Fibonacci polynomial of $F$ over $\mathbb{F}%
_{p}.$
\begin{theorem}
\label{siap} Let $f(x)\in \mathbb{F}_{p}[x]$ be the Fibonacci polynomial
with period $l=p-1$ and $\beta \left( p\right) =1.$ Then,
\begin{enumerate}
\item $(f(x),x^{p-1}-1)=\frac{x^{p-1}-1}{x^{2}+x-1}\in \mathbb{F}_{p}[x]$
\item The cyclic code $C=\langle f(x)\rangle $ generated by $f(x)$ with
dimension $2,$ and the minimum distance $d=p-2.$
\item $C=\langle f(x)\rangle $ is an MDS code of type $[p-1,2,p-2]_{p}.$
\end{enumerate}
\end{theorem}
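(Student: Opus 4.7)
The plan is to deduce everything from a single polynomial identity
\[
f(x)\bigl(x^{2}+x-1\bigr)=x\bigl(x^{p-1}-1\bigr)\qquad\text{in}\ \mathbb{F}_{p}[x].
\]
To get it, I would expand $(1-x-x^{2})\sum_{i=0}^{p-2}F_{i}x^{i}$ coefficient by coefficient. For $2\le i\le p-2$ the Fibonacci recurrence $F_{i}=F_{i-1}+F_{i-2}$ kills the coefficient of $x^{i}$; at $i=0,1$ one obtains $F_{0}=0$ and $F_{1}-F_{0}=1$; and the boundary coefficients at $x^{p-1}$ and $x^{p}$ reduce to $-(F_{p-2}+F_{p-3})=-F_{p-1}\equiv 0$ and $-F_{p-2}\equiv -1\pmod p$, using Lemma \ref{vajda1} together with $F_{p-2}=F_{p}-F_{p-1}\equiv 1$. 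Collecting gives $(1-x-x^{2})f(x)=x-x^{p}$, which is the identity after a sign.

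For part (1), the hypothesis $l=p-1$ forces $p\equiv\pm 1\pmod{10}$ (Theorem \ref{wall2}), so $5$ is a quadratic residue modulo $p$ and $x^{2}+x-1$ splits over $\mathbb{F}_{p}$ into distinct linear factors; the roots are nonzero (their product is $-1$), hence lie in $\mathbb{F}_{p}^{\ast}$, making them $(p-1)$-th roots of unity. Therefore $x^{2}+x-1\mid x^{p-1}-1$, and $g(x):=(x^{p-1}-1)/(x^{2}+x-1)\in\mathbb{F}_{p}[x]$ has degree $p-3$. The identity from the first step reads $f(x)=x\,g(x)$, and $\gcd(x,x^{2}+x-1)=1$ gives
\[
\gcd\bigl(f(x),x^{p-1}-1\bigr)=\gcd\bigl(xg(x),(x^{2}+x-1)g(x)\bigr)=g(x),
\]
which is (1). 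Consequently $\langle f\rangle=\langle g\rangle$ in $\mathbb{F}_{p}[x]/(x^{p-1}-1)$, and Corollary \ref{c2} yields $\dim C=(p-1)-(p-3)=2$.

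The main obstacle is the minimum distance. I would write a generic codeword as $c(x)=(a_{0}+a_{1}x)g(x)$ with $(a_{0},a_{1})\ne(0,0)$ and multiply by $x^{2}+x-1$; using $(x^{2}+x-1)g(x)=x^{p-1}-1$ and matching coefficients forces $c_{0}=a_{0}$, $c_{1}=a_{0}+a_{1}$, and $c_{i}=c_{i-1}+c_{i-2}$ for $2\le i\le p-2$, so $(c_{i})$ is a generalized Fibonacci sequence over $\mathbb{F}_{p}$. Since $\sqrt{5}\in\mathbb{F}_{p}$, Binet's formula gives $c_{i}=C\phi^{i}+D\bar{\phi}^{i}$, where $\phi,\bar{\phi}$ are the roots of $t^{2}-t-1$. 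From $\beta(p)=1$ and Theorem \ref{rob1} we get $\alpha(p)=l=p-1$, so $F_{n}\equiv 0\pmod p$ exactly when $(p-1)\mid n$; reinterpreted via Binet, $\theta:=\phi/\bar{\phi}$ has multiplicative order $p-1$ and is thus a primitive element of $\mathbb{F}_{p}^{\ast}$. If $CD=0$, then $(c_{i})$ has no zero entry; if $CD\ne 0$, the equation $\theta^{i}=-D/C$ has a unique solution in $\{0,1,\ldots,p-2\}$, giving exactly one zero coordinate. Either way $\mathrm{wt}(c)\ge p-2$, and combined with the Singleton bound $d\le n-k+1=p-2$ this forces $d=p-2$, proving (2) and (3).
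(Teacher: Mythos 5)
Your proof is correct, and while the opening identity $f(x)(x^{2}+x-1)=x(x^{p-1}-1)$ and the resulting gcd computation follow the same line as the paper (the paper simply asserts the identity by ``direct checking''; you supply the coefficient-by-coefficient verification, including the needed appeal to Lemma \ref{vajda1} and the justification via quadratic reciprocity that $x^{2}+x-1$ splits into distinct nonzero linear factors so that it divides $x^{p-1}-1$), your treatment of the minimum distance is genuinely different from, and in fact stronger than, the paper's. The paper only computes the weights of the two rows $g(x)$ and $xg(x)$ of the generator matrix and then claims that a codeword with two zero coordinates would make $\{g(x),xg(x),c(x)\}$ linearly independent, contradicting $\dim C=2$; as written this inference is not justified, since $c(x)$ lies in the span of $g(x)$ and $xg(x)$ regardless of how many zeros it has. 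Your argument closes that gap: by multiplying an arbitrary codeword by $x^{2}+x-1$ you show every codeword is a generalized Fibonacci sequence, and then Binet's formula together with the observation that $\beta(p)=1$ forces $\alpha(p)=p-1$, hence $\theta=\phi/\bar{\phi}$ is a primitive root, lets you count the zero coordinates of \emph{every} nonzero codeword (at most one), after which Singleton pins down $d=p-2$. As a bonus, your case analysis ($CD=0$ versus $CD\neq 0$) recovers the full weight distribution of Table \ref{tab:2} essentially for free, something the paper only obtains later in Theorem \ref{c6}. The only cosmetic point worth adding is a sentence ruling out the small primes $p\le 5$ when you invoke Theorem \ref{wall2} to conclude $p\equiv\pm 1\pmod{10}$ from $l=p-1$, but this is immediate.
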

\begin{proof}
\begin{enumerate}
\item By direct checking and applying the properties of Fibonacci sequence
we see that $f(x)(x^{2}+x-1)=x^{l}-x\in \mathbb{F}_{p}[x].$ Since $x|f(x)$
but $x{\nmid }x^{p-1}-1$ we have $(f(x),x^{p-1}-1)=\frac{x^{p-1}-1}{x^{2}+x-1}.$
\item Let $g(x)=(f(x),x^{p-1}-1)=\frac{x^{p-1}-1}{x^{2}+x-1}$ be the cyclic
code of length $p-1$ generated by $g(x)$ and dimension $p-1-\deg (g(x)))=2.$
This codes has exactly $p^{2}$ codewords. Since $g(x)=\frac{f(x)}{x},$ $%
w(f(x))=w(g(x)).$ Since the number of zeros in a single period of given
Fibonacci sequence is only one, then we have $w(g(x))=p-2.$ Also, $%
w(xg(x))=p-2$ for which $xg(x)\in C.$ The codeword $g(x)$ has the zero entry
in its first coordinate and $xg(x)$ has the zero entry in its second
coordinate. Suppose that a codeword $c(x)\in C$ has two entries with zeros.
Then, ${g(x),xg(x),c(x)}$ will give a linearly independent subset of vectors
in $C$ which is a contradiction to the dimension $2$ of $C.$
\item Follows from previous part and Definition \ref{mds}.
\end{enumerate}
\end{proof}
\begin{corollary}
\label{c3} The codes given in Theorem \ref{siap} are $RS$ codes.
\end{corollary}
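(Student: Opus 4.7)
The plan is to exhibit the generator $g(x) = (x^{p-1}-1)/(x^2+x-1)$ from Theorem \ref{siap} as a product $(x-\alpha^{a+1})(x-\alpha^{a+2})\cdots(x-\alpha^{a+\delta-1})$ for some primitive element $\alpha \in \mathbb{F}_p^{\ast}$, a starting index $a \ge 0$, and $\delta = p-2$; this is exactly the RS-generator form, from which the corollary follows at once.

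First I would verify that $x^2+x-1$ splits over $\mathbb{F}_p$. By Theorem \ref{rob1}, the hypotheses $\beta(p)=1$ and $l=p-1$ give $\alpha(p)=p-1$, and combining this with Theorem \ref{wall2} rules out $p \equiv \pm 3 \pmod{10}$ for $p > 5$, so $p \equiv \pm 1 \pmod 5$. The quadratic-residue lemma then yields $\sqrt{5} \in \mathbb{F}_p$. Setting $\Phi = (1+\sqrt{5})/2$ (the golden ratio in $\mathbb{F}_p$) and $\Psi = (1-\sqrt{5})/2 = -\Phi^{-1}$, a direct check shows that the roots of $x^2+x-1$ are $\phi_1 = -\Phi$ and $\phi_2 = \Phi^{-1}$. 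Since $x^{p-1}-1 = \prod_{\beta \in \mathbb{F}_p^{\ast}}(x-\beta)$ splits completely over $\mathbb{F}_p$, so does $g(x)$, with root set $\mathbb{F}_p^{\ast} \setminus \{\phi_1,\phi_2\}$.

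The heart of the argument is to produce a primitive element with respect to which $\phi_1$ and $\phi_2$ are consecutive powers. Using the Binet identity $F_n = (\Phi^n - \Psi^n)/\sqrt{5}$, which remains valid in $\mathbb{F}_p$ because the Fibonacci recurrence has the same characteristic polynomial there, I would observe that $F_n \equiv 0 \pmod p$ is equivalent to $(\Phi/\Psi)^n = 1$. Hence $\alpha(p) = \mathrm{ord}_p(\Phi/\Psi) = \mathrm{ord}_p(-\Phi^2)$ (using $\Psi = -\Phi^{-1}$), and the assumption $\alpha(p) = p-1$ forces $\alpha := -\Phi^2$ to be a primitive element of $\mathbb{F}_p^{\ast}$. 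A one-line calculation then gives $\phi_1/\phi_2 = -\Phi^2 = \alpha$, so $\phi_2 = \alpha^j$ and $\phi_1 = \alpha^{j+1}$ for some $j$; removing these two consecutive powers from $\{\alpha^0, \alpha^1, \dots, \alpha^{p-2}\}$ leaves the $p-3$ consecutive powers $\alpha^{j+2}, \alpha^{j+3}, \dots, \alpha^{j+p-2}$ (indices read cyclically modulo $p-1$). Therefore
\[
g(x) = \prod_{i=1}^{p-3}\bigl(x - \alpha^{(j+1)+i}\bigr),
\]
which matches the RS-generator form with $a = j+1$ and $\delta = p-2$.

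I expect the main obstacle to be this second step: the clean identification $\alpha(p) = \mathrm{ord}_p(-\Phi^2)$ and the deduction that $-\Phi^2$ is primitive under the stated hypotheses. That is where the Fibonacci data are converted into the cyclotomic structure demanded by the RS definition; once it is in hand, the remaining work is routine manipulation of roots of unity in $\mathbb{F}_p^{\ast}$ and bookkeeping of exponents modulo $p-1$.
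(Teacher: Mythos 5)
The paper states Corollary \ref{c3} with no proof at all, so there is nothing to compare your argument against; what you have written is, in effect, the missing justification, and it is correct. The two pivotal observations both check out: (i) the hypothesis $l_p=p-1$ together with Theorem \ref{wall2} does force $p\equiv\pm1\pmod 5$ (for $p\equiv\pm3\pmod{10}$ one would need $(p-1)\mid 2(p+1)$, i.e.\ $(p-1)\mid 4$, which no such prime with $l_p=p-1$ satisfies), so $5$ is a square and $x^2+x-1$ splits with roots $-\Phi$ and $\Phi^{-1}$; and (ii) since $\Phi\Psi=-1$, Binet gives $F_n=0$ iff $(\Phi/\Psi)^n=(-\Phi^2)^n=1$, so $\alpha(p)=p-1$ makes $-\Phi^2$ primitive, and $(-\Phi)/\Phi^{-1}=-\Phi^2$ shows the two excluded roots are consecutive powers of that primitive element. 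Hence the roots of $g(x)=(x^{p-1}-1)/(x^2+x-1)$ are $p-3$ consecutive powers of a primitive element, which is exactly the paper's RS generator form with $\delta=p-2$, matching the minimum distance $d=p-2$ from Theorem \ref{siap}. The only implicit hypotheses worth flagging are $p\neq 5$ (needed for $\Phi\neq\Psi$ and hence for Binet) and $p\geq 7$ (so that $2\leq\delta\leq p-1$); both are automatic since $l_p=p-1$ already forces $p\equiv\pm1\pmod{10}$, so $p\geq 11$.
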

\begin{theorem}
\label{siap4} Let $f(x)\in \mathbb{F}_{p}[x]$ be the Fibonacci polynomial
with period $l=p-1,$ $\beta \left( p\right) =1$ and $C=\langle f(x)\rangle $%
. Then, the dual code of $C,$ $C^{\bot }=\left\langle x^{2}+x-1\right\rangle
$ is an MDS code with parameters $[p-1,p-3,3]_{p}$.
\end{theorem}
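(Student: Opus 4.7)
The plan is to pin down $C^{\perp}$ as a cyclic code by reading off its generator from the parity-check polynomial of $C$, and then deduce the three parameters in turn using (i) a degree count, (ii) the Singleton bound, and (iii) the classical fact that the dual of an MDS code is MDS.

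First, I would use the identity from Theorem~\ref{siap}(1), namely $g(x)\cdot(x^{2}+x-1)=x^{p-1}-1$ in $\mathbb{F}_{p}[x]$. This says that $h(x)=x^{2}+x-1$ is the parity-check polynomial of $C=\langle g(x)\rangle$. By the standard description of duals of cyclic codes (together with Corollary~\ref{c2}, which is applicable because $x^{2}+x-1 \mid x^{p-1}-1$), the dual code $C^{\perp}$ is itself cyclic and is generated by the reciprocal of $h$; a short calculation shows that this ideal coincides with $\langle x^{2}+x-1\rangle$ inside $\mathbb{F}_{p}[x]/(x^{p-1}-1)$.

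Next, I would record the length and dimension. The length is inherited from $C$, so $n=p-1$. Since the generator of $C^{\perp}$ has degree $2$, we get $\dim C^{\perp}=(p-1)-2=p-3$. For the minimum distance, the Singleton bound from Definition~\ref{single} immediately gives $d(C^{\perp})\leq (p-1)-(p-3)+1=3$. For the matching lower bound, I would invoke the classical duality principle that the dual of an MDS code is MDS: Theorem~\ref{siap}(3) states that $C$ is an MDS $[p-1,2,p-2]_{p}$ code, so $C^{\perp}$ must also be MDS, which forces $d(C^{\perp})=3$ and completes the proof.

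The main obstacle I expect is the bookkeeping in step one: depending on the chosen convention for the dual, the standard recipe yields either $x^{2}+x-1$ or its reciprocal $-x^{2}+x+1$ as the natural generator of $C^{\perp}$, and one has to verify that the explicit form claimed in the statement is the correct one (or at least generates the same ideal in $\mathbb{F}_{p}[x]/(x^{p-1}-1)$). Once that identification is secured, the rest of the proof is essentially a combination of a degree count, the Singleton bound, and the dual-MDS theorem, with no further computation required.
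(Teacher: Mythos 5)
Your derivation of the parameters is exactly the paper's: the paper likewise takes the length as $p-1$, gets $\dim C^{\perp}=p-3$ from the degree of the quadratic generator, and obtains $d=3$ by combining ``the dual of an MDS code is MDS'' with the Singleton bound. Where you go beyond the paper is the identification of the generator of $C^{\perp}$, which the paper simply asserts and you rightly single out as the real work. However, the ``short calculation'' you defer does not come out the way you claim. The check polynomial of $C$ is $h(x)=x^{2}+x-1$, its monic reciprocal is $h^{*}(x)=x^{2}-x-1$, and $\langle x^{2}-x-1\rangle\neq\langle x^{2}+x-1\rangle$ in $\mathbb{F}_{p}[x]/(x^{p-1}-1)$: the roots of $h^{*}$ are the inverses of the roots $\alpha,\beta$ of $h$, and since $\alpha\beta=-1$ the set $\{\alpha,\beta\}$ is not inverse-closed (that would force $\alpha+\beta=0$, whereas $\alpha+\beta=-1$), so the two quadratics define different cyclic codes. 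The correct identification is $C^{\perp}=\langle x^{2}-x-1\rangle$, and this is what the paper's own Example \ref{e3} exhibits: the first row of $H$ there corresponds to $1-x^{8}-x^{9}$, which is divisible by $x^{2}-x-1$ but not by $x^{2}+x-1$ over $\mathbb{F}_{11}$. Fortunately none of this touches the parameters --- either quadratic has degree $2$ and divides $x^{p-1}-1$, so the $[p-1,p-3,3]_{p}$ MDS conclusion stands --- but the explicit generator in the theorem statement, and in your step one, needs the reciprocal correction rather than a verification that the two ideals coincide.
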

\begin{proof}
Clearly the length of $C^{\bot }$ is $p-1.$ Let us now
determine the dimension and minimum distance of $C^{\bot }.$ Since $C^{\bot
}=\langle x^{2}+x-1\rangle $, we have dim$\left( C^{\bot }\right)
=p-1-2=p-3. $ We know that $C$ is MDS, so is $C^{\bot }$\cite{MacW}. By the
Singleton bound we have $p-1+1=p-3+d.$ Thus $d=3.$
\end{proof}
\begin{corollary}
\label{siap1} Let $f(x)\in \mathbb{F}_{p}[x]$ be the Fibonacci polynomial
with period $l=p-1.$
\begin{enumerate}
\item If $\beta \left( p\right) =2,$ then $C=\langle f(x)\rangle $ is a
cyclic code of type $[p-1,2,p-3]_{p}.$
\item If $\beta \left( p\right) =4,$ then $C=\langle f(x)\rangle $ is a
cyclic code of type $[p-1,2,p-5]_{p}.$
\end{enumerate}
\end{corollary}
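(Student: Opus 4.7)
The plan is to carry over the framework of Theorem \ref{siap} essentially verbatim and only redo the weight analysis. The length and dimension of $C$ are insensitive to $\beta(p)$: the polynomial identity $f(x)(x^{2}+x-1)=x(x^{p-1}-1)$ that drives the proof of Theorem \ref{siap} depends only on the Fibonacci recurrence and on the period being exactly $l=p-1$, both of which still hold. Combined with the observation that $l=p-1$ forces $p\equiv\pm1\pmod 5$ (Lemma \ref{vajda1}), so that $x^{2}+x-1$ splits over $\mathbb{F}_{p}$ and divides $x^{p-1}-1$, one concludes as before that $g(x):=(x^{p-1}-1)/(x^{2}+x-1)$ has degree $p-3$, so $C=\langle g(x)\rangle$ has length $p-1$ and dimension $2$ in both regimes.

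For the minimum distance I would first re-interpret the codewords. A vector $c=(c_{0},\ldots,c_{n-1})$ with $n=p-1$ lies in $C$ iff $(x^{2}+x-1)c(x)\equiv 0\pmod{x^{n}-1}$, and unwinding this convolution one coefficient at a time gives the cyclic three-term recurrence $c_{i}\equiv c_{i-1}+c_{i-2}\pmod p$ with indices taken modulo $n$. Hence every codeword is a generalized Fibonacci sequence $\bar G(c_{0},c_{1})$, and its Hamming weight equals $n$ minus the number of zeros in one period of $\bar G(c_{0},c_{1})$.

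To count those zeros I would pass to Binet form. Since $x^{2}-x-1$ also splits over $\mathbb{F}_{p}$, one can write $c_{n}=A\phi^{n}+B\psi^{n}$ with $\phi,\psi\in\mathbb{F}_{p}^{*}$ the two golden-ratio elements and $A,B$ linear functions of $(c_{0},c_{1})$. If exactly one of $A,B$ vanishes then $c_{n}$ is a nonzero geometric progression and has no zeros at all. If $A,B\neq 0$ then $c_{n}=0$ rewrites as $(\phi/\psi)^{n}=-B/A$; because $\phi/\psi$ has multiplicative order $\alpha(p)$ in $\mathbb{F}_{p}^{*}$ (its minimal exponent killing $F_{n}=(\phi^{n}-\psi^{n})/\sqrt{5}$), this equation either has no solution in one period or has exactly $n/\alpha(p)=\beta(p)$ solutions, by Theorem \ref{rob1}. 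Either way the weight of a nonzero codeword is at least $n-\beta(p)=p-1-\beta(p)$, with equality attained by $f(x)$ itself. Specializing to $\beta(p)=2$ and $\beta(p)=4$ yields the announced parameters $[p-1,2,p-3]_{p}$ and $[p-1,2,p-5]_{p}$.

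The main obstacle is precisely this weight lower bound: a direct BCH-bound attack would fail because the two "missing" $(p-1)$-th roots of unity $\phi',\psi'$ (the roots of $x^{2}+x-1$) need not be cyclically adjacent, so BCH alone gives only a distance near $n/2$. Routing through the Binet formula for generalized Fibonacci sequences both sidesteps this obstruction and has the pleasant feature of making $f(x)$ a manifestly extremal codeword.
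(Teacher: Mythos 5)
Your proposal is correct, and it reaches the stated parameters by a genuinely different route than the paper. The paper gives no standalone proof of Corollary \ref{siap1}: it is meant to follow from the argument of Theorem \ref{siap} (the gcd computation $\gcd(f(x),x^{p-1}-1)=(x^{p-1}-1)/(x^{2}+x-1)$ for length and dimension, plus the observation that $w(f(x))=l-\beta(p)$ because $\beta(p)$ counts the zeros in one period), with the lower bound on the weight of a general codeword handled only later and rather informally via Lemma \ref{siap5} and Theorem \ref{c6}, which track how generalized Fibonacci sequences $\bar G(a,b)$ partition into scalar multiples and index shifts of the basic sequence. You keep the same first half (and correctly note that $l=p-1$ forces $p\equiv\pm1\bmod 5$, so $x^{2}+x-1$ splits and divides $x^{p-1}-1$; your identity $f(x)(x^{2}+x-1)=x(x^{p-1}-1)$ is in fact the corrected form of the paper's $x^{l}-x$), but you replace the weight analysis entirely: you characterize $C$ by its check polynomial $x^{2}+x-1$, so codewords are exactly the cyclic solutions of $c_{i}=c_{i-1}+c_{i-2}$, and then count zeros via the Binet form $c_{j}=A\phi^{j}+B\psi^{j}$, using that $\phi/\psi$ has order $\alpha(p)$ and Theorem \ref{rob1} to conclude each nonzero codeword has exactly $0$ or $(p-1)/\alpha(p)=\beta(p)$ zeros. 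This is cleaner and strictly stronger than what the paper does: it rigorously yields the full two-weight structure (weights $l-\beta(p)$ and $l$ only), i.e., it proves Theorem \ref{c6} and the weight distributions of Corollary \ref{c7} as a byproduct, whereas the paper's "two zeros would give three linearly independent codewords" step in Theorem \ref{siap} does not transparently generalize to $\beta(p)>1$. The one hypothesis you should state explicitly is $p\neq 5$ (so that $\phi\neq\psi$ and the Binet form is valid), but this is automatic since $l_{5}=20\neq 4$.
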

\begin{theorem}
\label{siap2} Let $f(x)\in \mathbb{F}_{p}[x]$ be the Fibonacci polynomial
with period $l=2p+2$ and $\beta \left( p\right) =2,4.$ Then,
\begin{enumerate}
\item $(f(x),x^{p-1}-1)=\frac{x^{p-1}-1}{x^{2}+x-1}\in \mathbb{F}_{p}[x]$
\item The cyclic code $C=\langle f(x)\rangle $ generated by $f(x)$ has
dimension $2,$ and the minimum distance $d=2p+2-\beta \left( p\right) $.
\item $C=\langle f(x)\rangle $ is a linear code of type $[2p+2,2,2p+2-\beta
\left( p\right) ]_{p}.$
\end{enumerate}
\end{theorem}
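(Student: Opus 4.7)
The plan is to follow the three-step template of Theorem~\ref{siap}, adapted to the new parameters $l=2p+2$ and $\beta(p)\in\{2,4\}$. I interpret the ``$x^{p-1}-1$'' appearing in part~1 as $x^{l}-1$, since a cyclic code of length $l$ lives in $\mathbb{F}_p[x]/(x^{l}-1)$ and part~3 asserts length $2p+2$.

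For part~1, I would multiply $f(x)=\sum_{i=0}^{l-1}F_i x^i$ by $x^2+x-1$ and invoke the Fibonacci recurrence to collapse the interior coefficients; the boundary terms are governed by $F_0\equiv F_l\equiv 0$ and $F_1\equiv F_{l+1}\equiv 1$, which force $F_{l-1}\equiv 1$ and $F_{l-2}\equiv -1$. This yields $(x^2+x-1)f(x)=x(x^{l}-1)$ in $\mathbb{F}_p[x]$. Since $x\mid f(x)$ while $\gcd(x,x^{l}-1)=1$, dividing by $x$ gives $\gcd(f(x),x^{l}-1)=\tfrac{x^{l}-1}{x^2+x-1}$. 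The requisite divisibility $(x^2+x-1)\mid(x^{l}-1)$ holds because, for $p\equiv\pm 3\pmod{10}$, the roots $\phi,\psi\in\mathbb{F}_{p^2}$ of $x^2+x-1$ satisfy $\phi\psi=-1$ and $\psi=\phi^p$, so $\phi^{p+1}=-1$ and $\phi^{l}=\phi^{2(p+1)}=1$.

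Part~2 then gives $\dim C=l-\deg g(x)=2$ via Corollary~\ref{c2}, where $g(x)=\gcd(f(x),x^{l}-1)$ has degree $l-2$. The codeword $f(x)=xg(x)\in C$ has weight $l-\beta(p)=2p+2-\beta(p)$, because a single period of the Fibonacci sequence contains exactly $\beta(p)$ zeros (by the remark following Theorem~\ref{rob1}). This already produces the upper bound $d\le 2p+2-\beta(p)$.

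The principal obstacle is the matching lower bound $d\ge 2p+2-\beta(p)$. The three-independent-codewords trick from Theorem~\ref{siap} does not transfer cleanly, since the $\beta(p)\ge 2$ zeros of $f$ form an arithmetic progression whose cyclic shifts need not yield $\beta(p)+1$ independent vectors. My approach would be to parametrize every nonzero codeword as $c(x)=(a+bx)g(x)\bmod(x^{l}-1)$ and analyze it spectrally: since $g(x)$ has every $l$-th root of unity as a root except $\phi$ and $\psi$, the DFT $c(\alpha^{j})$ vanishes for every $j\notin\{i_1,i_2\}$ where $\alpha^{i_1}=\phi$ and $\alpha^{i_2}=\psi$. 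Inverse Fourier then writes $c_i=A\phi^{-i}+B\psi^{-i}$ with $A,B\in\mathbb{F}_{p^2}$, so $c_i=0$ reduces to $(\psi/\phi)^i=-B/A$; because $\psi/\phi=\phi^{p-1}$ has order $l/\gcd(l,p-1)=l/\gcd(4,p-1)$ in $\mathbb{F}_{p^2}^{*}$, this equation has at most $\gcd(4,p-1)$ solutions in $\{0,\dots,l-1\}$. Matching $\gcd(4,p-1)$ with $\beta(p)\in\{2,4\}$ (which may be verified against Corollary~\ref{rob3} and the examples in the table before Theorem~\ref{rob1}) delivers the required bound, and part~3 is immediate from parts~1 and~2.
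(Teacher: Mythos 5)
Your proposal is correct, and it takes a genuinely different (and more complete) route than the paper, whose entire proof of Theorem \ref{siap2} is the single line ``Follows from Theorem \ref{siap}.'' That citation implicitly reuses the $\beta(p)=1$ argument of Theorem \ref{siap}, which, as you correctly observe, does not transfer: when $\beta(p)\in\{2,4\}$ one must rule out nonzero codewords with \emph{more} than $\beta(p)$ zero coordinates, and the three-vector linear-independence trick no longer does this. Your spectral argument closes exactly that gap: writing $c_i=A\phi^{-i}+B\psi^{-i}$ by Binet/inverse DFT over $\mathbb{F}_{p^2}$ (legitimate since $2(p+1)\mid p^2-1$ for odd $p$), noting that $A$ and $B$ are both nonzero for every nonzero codeword because $a+b\phi\neq 0$ when $(a,b)\neq(0,0)$ and $\phi\notin\mathbb{F}_p$, and counting solutions of $(\phi/\psi)^i=-B/A$ via the order of $\phi^{p-1}$, which is $l/\gcd(4,p-1)$ because $\phi$ has order exactly $l=2(p+1)$ when the Pisano period is $2p+2$. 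Every nonzero codeword therefore has exactly $0$ or $\gcd(4,p-1)$ zeros; since the Fibonacci codeword itself realizes $\beta(p)$ zeros, $\beta(p)=\gcd(4,p-1)$ and $d=l-\beta(p)$ follows, and as a bonus your argument re-derives the two-weight distribution of Theorem \ref{c6}. Your reading of ``$x^{p-1}-1$'' in part 1 as $x^{2p+2}-1$ is the right correction of what is evidently a typo carried over from Theorem \ref{siap}, and your boundary computation $(x^2+x-1)f(x)=x(x^{l}-1)$ from $F_{l-1}\equiv 1$, $F_{l-2}\equiv -1$ is exactly the identity the paper relies on. In short: parts 1 and the dimension count follow the paper's template, but your minimum-distance lower bound is a genuinely new, rigorous argument where the paper offers only a reference to a proof that does not apply verbatim.
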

\begin{proof}
Follows from Theorem \ref{siap}.
\end{proof}
\begin{corollary}
\label{siap3} Let $f(x)\in \mathbb{F}_{p}[x]$ be the Fibonacci polynomial
with period $l=2p+2.$
\begin{enumerate}
\item If $\beta \left( p\right) =2,$ then $C=\langle f(x)\rangle $ is an
optimal code of type $[2p+2,2,2p]_{p}.$
\item If $\beta \left( p\right) =4,$ then $C=\langle f(x)\rangle $ is a code
of type $[2p+2,2,2p-2]_{p}.$
\end{enumerate}
\end{corollary}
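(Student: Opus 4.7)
The plan is to derive both parts of the statement as direct specializations of Theorem~\ref{siap2}, which already establishes that when the Fibonacci polynomial $f(x)$ has period $l = 2p+2$, the cyclic code $C = \langle f(x)\rangle$ has parameters $[2p+2,\, 2,\, 2p+2-\beta(p)]_p$. Substituting $\beta(p) = 2$ immediately yields the parameters $[2p+2, 2, 2p]_p$ of part~(1), and substituting $\beta(p) = 4$ yields $[2p+2, 2, 2p-2]_p$ of part~(2). Thus, what is left to verify is the word \emph{optimal} appearing in part~(1).

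For optimality in part~(1), I would invoke the Griesmer bound with $k = 2$ and $d = 2p$. The bound reads
\[
n \geq \sum_{i=0}^{1} \left\lceil \frac{2p}{p^{i}} \right\rceil = 2p + 2,
\]
and this matches the actual length $n = 2p+2$. Hence no $[n', 2, 2p]_p$ code with $n' < 2p+2$ can exist, which exhibits the code as length-optimal for its dimension and minimum distance. (This is analogous to Example~\ref{C}, where meeting the Griesmer bound was used as the certificate of optimality.) Note that the Singleton bound gives $d \leq p$, which is vacuous at this length, so the Griesmer bound is essential here.

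For part~(2), no optimality claim is made; the parameter tuple alone is the content and it follows directly from Theorem~\ref{siap2}. One can observe as a sanity check that the Griesmer bound only demands $n \geq (2p-2) + \lceil (2p-2)/p\rceil = 2p$ in this case, which is satisfied with slack of $2$, so the statement neither claims nor contradicts optimality.

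The main obstacle, to the extent there is one, lies entirely upstream in Theorem~\ref{siap2}: once the general minimum distance $2p+2 - \beta(p)$ is known, the corollary is essentially bookkeeping plus one ceiling computation. The more delicate point is conceptual rather than technical: one should confirm that the hypothesis ``period $l = 2p+2$'' is consistent with both $\beta(p) = 2$ and $\beta(p) = 4$, which is guaranteed by Theorem~\ref{wall2}(2) together with Theorem~\ref{rob1} and Corollary~\ref{rob3}, so no additional case analysis on $p$ is needed.
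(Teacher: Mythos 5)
Your proposal is correct and takes essentially the same route as the paper: both parts are obtained by substituting $\beta(p)=2$ and $\beta(p)=4$ into the parameters $[2p+2,2,2p+2-\beta(p)]_p$ from Theorem~\ref{siap2}, and optimality in part~(1) is certified by the Griesmer bound computation $2p+2\geq \left\lceil 2p/p^{0}\right\rceil +\left\lceil 2p/p^{1}\right\rceil =2p+2$, exactly as in the paper's proof. Your additional remarks (the Singleton bound being vacuous, and the consistency of $l=2p+2$ with both values of $\beta(p)$) are harmless extras not present in the original argument.
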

\begin{proof}
\begin{enumerate}
\item Since $\beta \left( p\right) =2$, by the Theorem \ref{siap2}, $%
C=\langle f(x)\rangle $ is a $[2p+2,2,2p]_{p}$-code. Recall that the
Griesmer bound for a linear code $C$ of type $\left[ n,k,d\right] _{p}$ is $%
n\geqslant \sum\limits_{i=0}^{k-1}\left\lceil \frac{d}{p^{i}}\right\rceil .$
Thus we have $2p+2\geqslant \left\lceil \frac{2p}{p^{0}}\right\rceil
+\left\lceil \frac{2p}{p^{1}}\right\rceil =2p+2.$ So, $C=\langle f(x)\rangle
$ is an optimal code.
\item Clearly, if $\beta \left( p\right) =4,$ then by the Theorem \ref{siap2}%
, $C=\langle f(x)\rangle $ is a code of type $[2p+2,2,2p-2]_{p}.$
\end{enumerate}
\end{proof}
\begin{lemma}
\label{siap5} Let $F_{n}$ be a Fibonacci sequences with period $l$ over $%
\mathbb{F}_{p}.$Then, there are $\frac{l-\beta \left( p\right) }{\beta
\left( p\right) }-1$ non-zero, non-multiple and different, two consecutive
terms in a sequence.
\end{lemma}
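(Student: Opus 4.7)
The plan is to combine Theorem \ref{rob1} with the linearity of the Fibonacci recurrence to reduce the count to the block structure of a period. Write $\alpha := \alpha(p)$, $\beta := \beta(p)$, and $s := s(p)$, so $l = \alpha\beta$.

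First I would observe that, by definition, $F_{\alpha} \equiv 0$ and $F_{\alpha+1} \equiv s \pmod p$; hence the shifted sequence $G_k := F_{\alpha+k}$ satisfies the Fibonacci recurrence with initial values $(0, s)$, which yields $F_{\alpha+k} \equiv s\cdot F_k \pmod p$ for every $k\geq 0$. Iterating, $F_{j\alpha+k} \equiv s^{j} F_k \pmod p$ for $0\leq j < \beta$. Consequently, the $l$ consecutive pairs $(F_i,F_{i+1})$ contained in one period partition into $\beta$ blocks of length $\alpha$, with the $j$-th block obtained from the $0$-th by multiplication by the scalar $s^{j}$.

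Next I would show that the $\alpha$ pairs of the first block are pairwise non-proportional. Suppose $(F_i,F_{i+1}) = \mu(F_j,F_{j+1})$ with $0\leq j < i < \alpha$ and $\mu\in\mathbb{F}_p^{*}$. Linearity of the recurrence extends this to $F_{i+m}\equiv \mu F_{j+m}\pmod p$ for all $m\geq 0$, so taking $m=\alpha-j$ gives $F_{i+\alpha-j}\equiv 0\pmod p$. But the scaling identity above shows that the zeros of $F_n$ in one period occur precisely at $n=0,\alpha,2\alpha,\ldots,(\beta-1)\alpha$, while $\alpha < i+\alpha-j < 2\alpha$, a contradiction. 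Hence the $\alpha$ pairs of the first block all lie in distinct scalar-multiple equivalence classes, so the total number of such classes among the consecutive pairs of one period is exactly $\alpha = l/\beta$.

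Finally I would identify the classes containing a pair with a zero coordinate. Within the first block, zeros appear only at indices $0$ and $\alpha$, so the only pairs with a zero entry are $(F_0,F_1)=(0,1)$ and $(F_{\alpha-1},F_{\alpha})=(F_{\alpha-1},0)$; these fall in two distinct classes, since the zero sits in different coordinates. Removing them leaves $\alpha - 2 = l/\beta - 2 = (l-\beta)/\beta - 1$ equivalence classes of consecutive pairs that are simultaneously non-zero in both entries, pairwise non-proportional, and pairwise distinct, as claimed.

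I expect the main obstacle to be the within-block non-proportionality step, as it is the only place where the precise role of $\alpha$ as the least zero-index is essential; the remainder is a bookkeeping argument built on the scaling identity $F_{\alpha+k}\equiv s\cdot F_k\pmod p$.
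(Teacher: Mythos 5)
Your proof is correct and follows the same underlying idea as the paper's: one period splits at its $\beta(p)$ zeros into $\beta(p)$ blocks, each a scalar multiple (by a power of $s(p)$) of the first, so the count reduces to the $\alpha(p)-2 = \frac{l-\beta(p)}{\beta(p)}-1$ pairs of a single block that avoid a zero entry. The paper's own proof is a two-sentence sketch that merely asserts this block structure; your version supplies the details it omits, in particular the within-block non-proportionality argument, which you correctly derive from the fact that $\alpha(p)$ is the least positive zero index.
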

\begin{proof}
We know that there are $l-\beta \left( p\right) $ nonzero
terms in a sequence and if we obtain zero term in a sequence, then the
length of subsequent terms until another zero term must be a multiple of the
preceding part of the sequence. So there are $\frac{l-\beta \left( p\right)
}{\beta \left( p\right) }-1$ non-zero, non-multiple and different, two
consecutive terms in a sequence.
\end{proof}
\begin{theorem}
\label{c6} The cyclic codes given in the Theorem \ref{siap}, Corollary \ref%
{siap1} and Theorem \ref{siap2} are constant one or two weight codes.
\end{theorem}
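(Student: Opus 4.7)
The plan is to identify each codeword with a generalized Fibonacci sequence read over one period and then to show that such a sequence can contain either $0$ or exactly $\beta(p)$ zero entries per period; this will force $C$ to have at most two distinct nonzero weights. The algebraic input comes from the identity established in the proof of Theorem~\ref{siap}: the generator $g(x)$ satisfies $g(x)(x^{2}+x-1)\equiv 0\pmod{x^{l}-1}$, and therefore every codeword $c(x)\in C$ satisfies the same congruence. Reading this in coefficients with cyclic indices gives the Fibonacci recurrence $c_{j}=c_{j-1}+c_{j-2}$, so the codewords of $C$ are precisely the generalized Fibonacci sequences $\bar{G}(c_{0},c_{1})$ written out over one period of length $l$, and the $p^{2}$ codewords correspond bijectively to the $p^{2}$ pairs $(c_{0},c_{1})\in\mathbb{F}_{p}^{2}$.

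The central observation is the following. Suppose a nonzero codeword has $c_{j}=0$ for some index $j$, and set $s=c_{j+1}$. We cannot have $s=0$, else the recurrence forces $c\equiv 0$, contradicting nontriviality; so $s\neq 0$. By linearity of the Fibonacci recurrence, the shifted sequence $(c_{j+k})_{k\geq 0}$ and $(sF_{k})_{k\geq 0}$ both satisfy the same recursion with the same initial data $(0,s)$, so $c_{j+k}=sF_{k}$ for all $k$. Hence the zero positions of $c$ within the period coincide, up to a cyclic shift, with those of $F$, of which there are exactly $\beta(p)$. Combined with the complementary case in which $c$ has no zeros at all, this shows that every nonzero codeword has Hamming weight $l$ or $l-\beta(p)$, so $C$ has at most two distinct nonzero weights.

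It remains to decide in each case whether one or both of these weights is actually attained. The weight $l-\beta(p)$ is always attained, e.g., by the codeword $\bar{G}(0,1)=F$ itself (which has $F_{0}=0$). Whether $l$ is attained depends on whether some nonzero initial pair yields a zero-free period. For Theorem~\ref{siap} and Corollary~\ref{siap1} ($p\equiv\pm 1\pmod{5}$, so $x^{2}-x-1$ splits over $\mathbb{F}_{p}$), one chooses $(c_{0},c_{1})$ on an eigenline of the shift matrix; the resulting codeword has the form $A\alpha^{n}$ or $B\beta^{n}$, never vanishes, and attains the weight $l$, so the code is a genuine two-weight code. For Theorem~\ref{siap2} ($p\equiv\pm 2\pmod{5}$) no such eigenline exists in $\mathbb{F}_{p}^{2}$, and one must instead determine when the orbit $A\langle\alpha\rangle\subset\mathbb{F}_{p^{2}}^{*}$ meets the trace-zero subspace $\ker(\mathrm{Tr}_{\mathbb{F}_{p^{2}}/\mathbb{F}_{p}})$ for every nonzero $A$. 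This coset-counting is the main technical obstacle: the outcome is governed by $\gcd(p-1,2(p+1))$, which equals $2$ when $\beta(p)=2$ (every coset of $\langle\alpha\rangle$ meets the trace kernel, yielding a one-weight code of weight $2p$) and $4$ when $\beta(p)=4$ (only half of the cosets are hit, yielding a two-weight code with weights $2p-2$ and $2p+2$).
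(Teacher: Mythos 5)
Your proof is correct, and its key step is genuinely different from the paper's. Both arguments rest on identifying the $p^{2}$ codewords with the generalized Fibonacci sequences $\bar{G}(c_{0},c_{1})$ read over one period (the paper does this implicitly through the generator matrix $\binom{f(x)}{xf(x)}$ together with Horadam's identities \eqref{gf} and \eqref{gf1}). From there the paper proceeds by enumeration: using Lemma \ref{siap5} it counts how many initial pairs are scalar multiples of a pair of consecutive Fibonacci numbers, concludes that exactly $2(p-1)+\left(\frac{l-\beta(p)}{\beta(p)}-1\right)(p-1)$ codewords have weight $l-\beta(p)$, and then asserts that all remaining nonzero codewords have full weight $l$. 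You instead prove the dichotomy directly: if a nonzero codeword vanishes at some position $j$, then $(c_{j+k})_{k}$ and $(c_{j+1}F_{k})_{k}$ satisfy the same recursion with the same initial data $(0,c_{j+1})$, so the codeword is a nonzero scalar multiple of a cyclic shift of $F$ and has exactly $\beta(p)$ zeros per period; otherwise it has none. This is cleaner, and it makes rigorous the paper's rather cryptic justification that ``the other weights must be $l$''; on the other hand, the paper's enumeration is precisely what yields the explicit weight distributions recorded in Corollary \ref{c7}, which your argument does not produce without the extra count. Your final paragraph, which decides whether one or both weights actually occur in each case, goes beyond what the theorem asserts (``at most two nonzero weights'' already gives ``one or two''); note that the claim that the $l=2p+2$ case is governed by $\gcd(p-1,2(p+1))=\gcd(p-1,4)$ is stated there without proof, but nothing in the theorem depends on it.
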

\begin{proof}
Since the dimension of given codes are $2,$ the generator
matrix is of the form%
\[
G=\binom{f(x)}{xf(x)}.
\]%
Thus we can generate in total $p^{2}$ codewords from the generator matrix $G$
where clearly one of them is all the zero vector. The number of nonzero
coefficients of $\ f(x)$ is the same as with $xf(x)$ and the weight of these
codewords are $l-\beta \left( p\right) .$ So,\ from the rows of generator
matrix $G,$ we can obtain $2\left( p-1\right) $ codewords of weight $l-\beta
\left( p\right) .$ Also, generalized Fibonacci sequences (for details see
\cite{Horadam}) satisfies the following facts:
\begin{enumerate}
\item If we start with any two consecutive fibonacci numbers for $a$ and $b$%
, $\bar{G}(a,b,i)$ will be essentially the same as the fibonacci sequence
but with its indices changed. The general rule is%
\begin{equation}
\bar{G}(f\left( k\right) ,f\left( k+1\right) ,i)=f\left( i+k\right)
\label{gf}
\end{equation}
\item Multiplying all the terms by $k$ gives the same sequence as the one
with starting values $ka$ and $kb$%
\begin{equation}
\bar{G}(ka,kb,i)=k\bar{G}(a,b,i).  \label{gf1}
\end{equation}
\end{enumerate}
The Equation \eqref{gf} says that if we start with any two consecutive
Fibonacci numbers, then we obtain codewords whose weight are the same with
the codewords of the form $f(x).$ By Lemma \ref{siap5}, there are $\frac{%
l-\beta \left( p\right) }{\beta \left( p\right) }-1$ non-zero, non-multiple
and different, two consecutive terms in a Fibonacci sequence$\mod
p$. Also from the Equation \eqref {gf1}, there are $p-1$ multiple of each
non-zero two consecutive terms. Then, we have $\left( \frac{l-\beta \left(
p\right) }{\beta \left( p\right) }-1\right) \left( p-1\right) $ codewords
which has same weight with $f(x).$ Therefore, the total number of the
codewords of weight $l-\beta \left( p\right) $ are $2\left( p-1\right)
+\left( \frac{l-\beta \left( p\right) }{\beta \left( p\right) }-1\right)
\left( p-1\right) .$ Since the minimum weight of the codes given in Theorem %
\ref{siap}, Corollary \ref{siap1} and Theorem \ref{siap2} are $d=$ $l-\beta
\left( p\right) $, the other weights of the codewords must be $l$. Because\
every Fibonacci sequence $\mod p$ has $\beta \left( p\right) $ parts and all
these parts have the same weight. Thus, the number of codewords of weight $l$
are
\[p^{2}-\left( 1+2\left( p-1\right) +\left( \frac{l-\beta \left( p\right)
}{\beta \left( p\right) }-1\right) \left( p-1\right) \right).\]
\end{proof}
\bigskip
\begin{corollary}
\label{c7} Let $A_{w}$ denote the number of codewords with Hamming weight $w$
in the codes given in the Theorem \ref{siap}, Corollary \ref{siap1} and
Theorem \ref{siap2}. Then the weight distribution of these codes are given
in Table \ref{tab:2}, \ref{tab:3}, \ref{tab:4}.
\end{corollary}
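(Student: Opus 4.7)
The plan is to invoke Theorem \ref{c6} directly, since that theorem already delivers the complete nonzero-weight enumeration of every code under consideration in terms of the period $l$ and the parameter $\beta(p)$. Its proof establishes $A_0 = 1$,
\[
A_{l-\beta(p)} \;=\; 2(p-1) + \Bigl(\tfrac{l-\beta(p)}{\beta(p)} - 1\Bigr)(p-1),
\]
and $A_l = p^2 - 1 - A_{l-\beta(p)}$, with all other $A_w = 0$. So the corollary reduces to populating Tables \ref{tab:2}, \ref{tab:3}, \ref{tab:4} by a small case analysis on the admissible pairs $(l,\beta(p))$.

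First I would handle Theorem \ref{siap}, where $l = p-1$ and $\beta(p) = 1$, to fill Table \ref{tab:2} with $A_{p-2} = (p-1)^2$ and $A_{p-1} = 2(p-1)$. Next I would treat Corollary \ref{siap1} by keeping $l = p-1$ but running through $\beta(p) = 2$ and $\beta(p) = 4$; these give nontrivial weights $p-3$ and $p-5$ respectively, with the $A_{p-1}$ entry read off from the complementary count $p^2-1-A_{l-\beta(p)}$, thus completing Table \ref{tab:3}. Finally I would repeat the substitution with $l = 2p+2$ and $\beta(p) \in \{2,4\}$ for Theorem \ref{siap2} to fill Table \ref{tab:4}.

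The only things that could go wrong are of a bookkeeping nature, so I would guard against them explicitly. I would verify that $\beta(p) \mid l - \beta(p)$ in every row (which follows from $l = \alpha(p)\beta(p)$ of Theorem \ref{rob1}, so that the displayed multiplicity is indeed an integer), check the sanity identity $1 + A_{l-\beta(p)} + A_l = p^2$, and confirm that each entry produced by the formulas above matches the corresponding table cell. There is no genuine mathematical obstacle beyond Theorem \ref{c6}; the principal risk is merely an arithmetic slip in one of the five parameter substitutions, which the sum check above is designed to catch.
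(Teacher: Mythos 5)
Your proposal is correct and matches the paper's intent exactly: the corollary carries no separate proof in the paper precisely because it is the direct substitution of the five admissible pairs $(l,\beta(p))$ into the counts $A_{l-\beta(p)}=2(p-1)+\bigl(\tfrac{l-\beta(p)}{\beta(p)}-1\bigr)(p-1)$ and $A_l=p^2-1-A_{l-\beta(p)}$ established in Theorem \ref{c6}. Your added sanity checks (integrality of the multiplicity and the total $1+A_{l-\beta(p)}+A_l=p^2$) are sensible and all five substitutions do reproduce the table entries, including the degenerate case $l=2p+2$, $\beta(p)=2$ where $A_l=0$ and the code is one-weight.
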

\begin{table}
\caption{The weight distribution of the codes given in Theorem \protect\ref%
{siap} for $\protect\beta \left( p\right) =1$}
\label{tab:2}
\begin{center}
\begin{tabular}{ccc}
\hline
Length & Weight $w$ & Multiplicity $A_{w}$ \\ \hline
$l=p-1$ &
\begin{tabular}{c}
$0$ \\
$p-2$ \\
$p-1$%
\end{tabular}
&
\begin{tabular}{c}
$1$ \\
$\left( p-1\right) ^{2}$ \\
$2\left( p-1\right)$%
\end{tabular}
\\ \hline
\end{tabular}%
\end{center}
\end{table}

\begin{table}
\caption{The weight distribution of the codes given in Corollary \protect\ref%
{siap1} and Theorem \protect\ref{siap2} for $\protect\beta \left( p\right)
=2 $}
\label{tab:3}
\begin{center}
\begin{tabular}{ccc}
\hline
Length & Weight $w$ & Multiplicity $A_{w}$ \\ \hline
$l=p-1$ &
\begin{tabular}{c}
$0$ \\
$p-3$ \\
$p-1$%
\end{tabular}
&
\begin{tabular}{c}
$1$ \\
$\frac{\left( p-1\right) ^{2}}{2}$ \\
$\frac{\left( p-1\right) \left( p+3\right) }{2}$%
\end{tabular}
\\ \hline
$l=2p+2$ &
\begin{tabular}{c}
$0$ \\
$2p$%
\end{tabular}
&
\begin{tabular}{c}
$1$ \\
$p^{2}-1$%
\end{tabular}
\\ \hline
\end{tabular}%
\end{center}
\end{table}
\begin{table}
\caption{The weight distribution of the codes given in Corollary \protect\ref%
{siap1} and Theorem \protect\ref{siap2} for $\protect\beta \left( p\right)
=4 $}
\label{tab:4}
\begin{center}
\begin{tabular}{ccc}
\hline
Length & Weight $w$ & Multiplicity $A_{w}$ \\ \hline
$l=p-1$ &
\begin{tabular}{c}
$0$ \\
$p-5$ \\
$p-1$%
\end{tabular}
&
\begin{tabular}{c}
$1$ \\
$\frac{\left( p-1\right) ^{2}}{4}$ \\
$\frac{\left( p-1\right) \left( 3p+5\right) }{4}$%
\end{tabular}
\\ \hline
$l=2p+2$ &
\begin{tabular}{c}
$0$ \\
$2p-2$ \\
$2p+2$%
\end{tabular}
&
\begin{tabular}{c}
$1$ \\
$\frac{p^{2}-1}{2}$ \\
$\frac{p^{2}-1}{2}$%
\end{tabular}
\\ \hline
\end{tabular}%
\end{center}
\end{table}

\subsection{Cyclic Codes From Generalized Fibonacci Polynomials}

If we take $F_{0}=2$ and $F_{1}=1$ as initial values for Fibonacci sequence,
then we obtain the well-known Lucas sequence. A natural problem then is to
figure out the structure of cyclic codes if they are defined in a more
general setting, with general initial values, which is known as generalized
Fibonacci sequences. Let $f(x)$ be a generalized Fibonacci polynomial with $%
F_{0}=a$ and $F_{1}=b.$ Then,
\begin{equation*}
f(x)(x^{2}+x-1)=ax^{p+1}+ax-a+bx^{p}-bx.
\end{equation*}%
If we consider this polynomial as the generator polynomial of a cyclic code,
as generator polynomial of an ideal in $\mathbb{F}_{p}[x]/(x^{p-1}-1),$ then
this means that
\begin{equation*}
f(x)(x^{2}+x-1)=ax^{2}+ax-a+bx-bx=a(x^{2}+x-1)\text{\ }\mod x^{p-1}-1.
\end{equation*}%
Hence, if $a\neq 0,$ then $a$ is a unit in the ring $R=\mathbb{F}%
_{p}[x]/(x^{p-1}-1)$ (or equivalently $\left( f(x),x^{p-1}-1\right) =1$) and
thus
\begin{equation*}
\langle f(x)\rangle =R.
\end{equation*}%
So, the only case to get a non trivial cyclic code from generalized
Fibonacci polynomials are the cases where $a=0.$ In case where $a=0$ all
cyclic codes (ideals) are the same. So the Fibonacci polynomial codes are
the only interesting ones among this family that we already studied above.

\subsection{Cyclic Codes From Extended Fibonacci Polynomials}

Let $E_{0}=0$ and $E_{1}=\ldots =E_{r-1}=1$ be elements of a finite field $%
\mathbb{F}_{p}.$ Then, the sequence defined by $E_{n}=\sum%
\limits_{j=n-r}^{n-1}E_{j}$ for $n\geq r$ we called the Extended Fibonacci
sequence in $\mathbb{F}_{p}.$ For example, for $r=3$ the Extended Fibonacci
sequence computed in $\mathbb{F}_{3}$ is
\begin{equation*}
\underline{0,1,1,2,1,1,1,0,2,0,2},1,0,0,\mathbf{1,1,}\ldots
\end{equation*}
Let $E=\left\{ E_{0},E_{1},\ldots ,E_{l},\ldots \right\} $ denote the Extended
Fibonacci sequence over $\mathbb{F}_{p}$ and suppose that the period of this
sequence is equal to $l.$ The polynomial $t(x)=\sum_{i=0}^{l-1}E_{i}x^{i}\in
\mathbb{F}_{p}[x]$ is called Extended Fibonacci polynomial of $E$ over $\mathbb{F}%
_{p}.$

\begin{theorem}
\label{t}
Let $(p=7,13)$ and $t(x)\in \mathbb{F}_{p}[x]$ be the extended Fibonacci polynomial
with period $l=p^{r-1}-1$ and $\beta\left( p\right)$ be the number of zeros of given extended fibonacci sequence. Then,
the cyclic code $C=\langle t(x)\rangle $ generated by $t(x)$ has
dimension $r,$ and the minimum distance $d=p^{r-1}-1-\beta \left( p\right)$.
So, $C=\langle t(x)\rangle $ is a linear code of type $[p^{r-1}-1,r,p^{r-1}-1-\beta
\left( p\right) ]_{p}.$
\end{theorem}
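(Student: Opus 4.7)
The plan is to adapt the strategy used in Theorem \ref{siap} and Theorem \ref{siap2} to the order-$r$ linear recurrence underlying the extended Fibonacci sequence. Let $q(x)=x^{r}-x^{r-1}-\cdots -x-1\in \mathbb{F}_{p}[x]$ denote the characteristic polynomial of that recurrence; this is the natural replacement for the polynomial $x^{2}+x-1$ appearing in the $r=2$ case.

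First I would establish the dimension by showing that $\gcd(t(x),x^{l}-1)=(x^{l}-1)/q(x)$ inside $\mathbb{F}_{p}[x]$. Concretely, I would multiply $t(x)$ by $q(x)$ and collect coefficients: by the recurrence $E_{n}=E_{n-1}+\cdots +E_{n-r}$, every coefficient of $x^{n}$ with $n\geq r$ in the product vanishes, and the low-index boundary terms are reabsorbed after reduction modulo $x^{l}-1$ by using $E_{n+l}=E_{n}$. Together with the observation that $q(x)$ is coprime to the part of $x^{l}-1$ not captured by $t(x)$, this gives $\deg\gcd(t(x),x^{l}-1)=l-r$, so the generator of $C$ has degree $l-r$ and $\dim C=r$.

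Next, for the minimum distance, the upper bound $d\le l-\beta (p)$ is immediate because $t(x)$ is a codeword whose number of nonzero entries in one period is precisely $l-\beta (p)$. For the matching lower bound I would use that every element of $C$ can be written as $\sum_{i=0}^{r-1}c_{i}\,x^{i}t(x)$, which corresponds to a generalized extended Fibonacci sequence (same recurrence, different initial block of length $r$). An analogue of Lemma \ref{siap5} for order $r$ should then bound the number of zeros per period of any such nonzero sequence by $\beta (p)$: one would like to show that if a nonzero codeword $c(x)$ contained more than $\beta (p)$ zeros, then cyclic shifts of $c(x)$ together with $t(x),xt(x),\ldots ,x^{r-1}t(x)$ would provide $r+1$ linearly independent vectors inside the $r$-dimensional code $C$, contradiction.

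The main obstacle is exactly this lower bound on the minimum distance. In the $r=2$ setting of Theorem \ref{siap} the positional argument is trivial because the code has dimension $2$, but for larger $r$ it requires a careful combinatorial analysis of where zeros can appear within one period of a sequence satisfying the order-$r$ recurrence. This is presumably why the statement is restricted to $p\in\{7,13\}$: for these primes the arithmetic of the recurrence permits a case-by-case verification (for instance, a direct check that $q(x)$ behaves suitably over $\mathbb{F}_{p}$ and that no proper sub-pattern of zeros can occur), whereas an unrestricted version would require either stronger hypotheses on the factorization of $q(x)$ modulo $p$ or a fundamentally different argument.
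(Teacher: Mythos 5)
The paper offers no proof of Theorem \ref{t} at all --- the statement is followed immediately by Table \ref{tab:5} and a worked example, so the result appears to rest on direct computation for $p=7,13$, $r=3$. There is therefore no argument of the authors' to compare yours against; I can only assess your outline on its own terms. Your treatment of the dimension (showing $\gcd(t(x),x^{l}-1)=(x^{l}-1)/q(x)$ with $q(x)=x^{r}-x^{r-1}-\cdots-x-1$, mirroring part 1 of Theorem \ref{siap}) and of the upper bound $d\le l-\beta(p)$ (the weight of $t(x)$ itself) is sound in outline, modulo checking that $q(x)$ is squarefree, divides $x^{l}-1$, and is genuinely the minimal polynomial of the sequence for these primes.

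The genuine gap is the lower bound on $d$, and the mechanism you propose for it cannot work. The argument ``a nonzero codeword with more than $\beta(p)$ zeros yields $r+1$ linearly independent vectors in an $r$-dimensional code'' is the MDS-type argument: dimension considerations can only forbid a nonzero codeword from having $r$ or more zeros when every $r$ columns of the generator matrix are independent, i.e.\ they bound the number of zeros by $k-1=r-1$. Here $\beta(7)=12$ and $\beta(13)=18$, vastly larger than $r-1=2$, and $t(x)$ itself already has $\beta(p)$ zeros, so the code is far from MDS and no contradiction of this linear-algebraic kind is available for a codeword with, say, $\beta(p)+1$ zeros. The ``analogue of Lemma \ref{siap5}'' you invoke is not in the paper, and its content --- that among all $p^{r}-1$ nonzero solutions of the order-$r$ recurrence modulo $p$ the maximal number of zeros per period is exactly $\beta(p)$, attained by the initial block $(0,1,\ldots,1)$ --- is precisely the minimum-distance claim itself, so appealing to it is circular. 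What is actually required is the zero-distribution (equivalently the full weight distribution) of the $r$-dimensional cyclic code with check polynomial $q(x)$; since $q$ splits over $\mathbb{F}_{p}$ as a linear times an irreducible quadratic (forcing the period $p^{2}-1$), codewords have the form $A\lambda^{n}+\mathrm{Tr}(B\mu^{n})$ and their zero counts vary with $(A,B)$ in a way governed by character sums, not by dimension. You correctly sense that the restriction to $p\in\{7,13\}$ signals a case-by-case verification, but your proposal does not supply that verification nor a substitute argument, so the minimum-distance claim remains unproved.
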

\begin{table}
\caption{Extended Fibonacci sequences computed in $\mathbb{F}_{7}$ and $\mathbb{F}_{13}.$}
\label{tab:5}
\begin{center}
\begin{tabular}{cccc}
\hline
$p$ & $l_{p}$ & $\beta\left( p\right) $ & The corresponding extended Fibonacci sequence \\ \hline
$7$ & $48$ & $12$ & $\left\{0,1,1,2,4,0,6,3,2,4,2,\ldots,2,2,4,1,0,5,6,4,1,4,2,0,6,1,0\right\} $ \\
$13$ & $168$ & $18$ & $\left\{ 0,1,1,2,4,7,0,11,5,3,6,1,10,4,\ldots,4,1,10,2,0,12,1,0\right\} $ \\ \hline
\end{tabular}
\end{center}
\end{table}

\subsection{Examples}

In this section we present some concrete examples of the theoretical part
established in the previous sections.
\begin{example}
For $p=7,$ and  $r=3$ we have the extended Fibonacci sequence
\begin{equation}
E=\{0,1,1,2,4,0,6,3,2,4,\ldots,0,6,1,0,0,1,1,\ldots\}
\end{equation}
with $l=48$ and $\beta\left( p\right)=12.$ Then by Theorem \ref{t} there exists a cyclic code of parameters  $[48,3,36]_{7}.$
\end{example}
\begin{example}
\label{e3} For $p=11,$ we have $F=\left\{ 0,1,1,2,3,5,8,2,10,1,0,1,\ldots
\right\} $ and hence the period of $F$, $l_{11}$ is $10$ and $\alpha \left(
11\right) =10.$ By the Theorem \ref{rob1}, $10=10.\beta \left( 11\right) ,$
thus, $\beta \left( 11\right) =1.$ Also, from Theorem \ref{siap}, the cyclic
linear code $C,$ generated by $f(x)$ is an MDS code of parameters $%
[10,2,9]_{11}$ and weight polynomial $u^{10}+100uv^{9}+20v^{10}.$ Indeed,
the Fibonacci polynomial of $F$ is $%
f(x)=x+x^{2}+2x^{3}+3x^{4}+5x^{5}+8x^{6}+2x^{7}+10x^{8}+x^{9},$ and the
minimal polynomial of $F$ is $g(x)=\gcd (f(x),x^{10}-1)=\frac{x^{10}-1}{%
x^{2}+x-1}=x^{8}+10x^{7}+2x^{6}+8x^{5}+5x^{4}+3x^{3}+2x^{2}+x+1.$ The
generator matrix of $C$ is the following:
\begin{equation}
G=
\begin{pmatrix}
1 & 0 & 1 & 1 & 2 & 3 & 5 & 8 & 2 & 10 \\
0 & 1 & 1 & 2 & 3 & 5 & 8 & 2 & 10 & 1%
\end{pmatrix}.
\end{equation}
Moreover, by the Theorem \ref{siap4}, the dual code of $C,$ $C^{\bot }$ is
also an MDS code with parameters $[10,8,3]_{11}$ and generator matrix:
\begin{equation}
H=%
\begin{pmatrix}
1 & 0 & 0 & 0 & 0 & 0 & 0 & 0 & 10 & 10 \\
0 & 1 & 0 & 0 & 0 & 0 & 0 & 0 & 10 & 9 \\
0 & 0 & 1 & 0 & 0 & 0 & 0 & 0 & 9 & 8 \\
0 & 0 & 0 & 1 & 0 & 0 & 0 & 0 & 8 & 6 \\
0 & 0 & 0 & 0 & 1 & 0 & 0 & 0 & 6 & 3 \\
0 & 0 & 0 & 0 & 0 & 1 & 0 & 0 & 3 & 9 \\
0 & 0 & 0 & 0 & 0 & 0 & 1 & 0 & 9 & 1 \\
0 & 0 & 0 & 0 & 0 & 0 & 0 & 1 & 1 & 10%
\end{pmatrix}.
\end{equation}
\end{example}
\begin{example}
\label{e4} For $p=19,$ we have
\[
F=\left\{ 0,1,1,2,3,5,8,13,2,15,17,13,11,5,16,2,18,1,0,1,\ldots \right\}
\]
and hence we have $l_{19}=18,\alpha \left( 19\right) =18,$ and by the
Theorem \ref{rob1}, $18=18\beta \left( 19\right) ,$ thus, $\beta \left(
19\right) =1.$ From Theorem \ref{siap}, the cyclic linear code $C,$
generated by $f(x)$ is an MDS code with parameters $[18,2,17]_{19}$ and
weight polynomial $u^{18}+324uv^{17}+36v^{18}.$ Indeed, the Fibonacci
polynomial of $F$ is $%
f(x)=x+x^{2}+2x^{3}+3x^{4}+5x^{5}+8x^{6}+13x^{7}+2x^{8}+15x^{9}+17x^{10}+13x^{11}+11x^{12}+5x^{13}+16x^{14}+2x^{15}+18x^{16}+x^{17},
$ and the minimal polynomial of $F$ is $g(x)=\gcd (f(x),x^{18}-1)=\frac{%
x^{18}-1}{x^{2}+x-1}%
=x^{16}+18x^{15}+2x^{14}+16x^{13}+5x^{12}+11x^{11}+13x^{10}+17x^{9}+15x^{8}+2x^{7}+13x^{6}+8x^{5}+5x^{4}+3x^{3}+2x^{2}+x+1.$
Furthermore, by the Theorem \ref{siap4}, the dual code of $C,$ $C^{\bot }$
is also an MDS code of type $[18,16,3]_{19}$.
\end{example}
\begin{example}
\label{e5} If we take $p=7,$ then $F=\left\{
0,1,1,2,3,5,1,6,0,6,6,5,4,2,6,1,0,1,\ldots \right\} $ and hence we have $%
l=16,$ $\alpha \left( 7\right) =8,$ and by the Theorem \ref{rob1}, $%
16=8\beta \left( 7\right) ,$ so $\beta \left( 7\right) =2.$ By the Theorem %
\ref{siap2} and Corollary \ref{siap3}, $C$ is an optimal code of type $%
[16,2,14]_{7}$ with weight polynomial $u^{16}+48u^{2}v^{14}$. Actually, the
Fibonacci polynomial of $F$ is $%
f(x)=x+x^{2}+2x^{3}+3x^{4}+5x^{5}+x^{6}+6x^{7}+6x^{9}+6x^{10}+5x^{11}+4x^{12}+2x^{13}+6x^{14}+x^{15},
$ and the minimal polynomial of $F$ is $g(x)=\gcd (f(x),x^{16}-1)=\frac{%
x^{16}-1}{x^{2}+x-1}%
=x^{14}+6x^{13}+2x^{12}+4x^{11}+5x^{10}+6x^{9}+6x^{8}+6x^{6}+x^{5}+5x^{4}+3x^{3}+2x^{2}+x+1.$
This code attains the Griesmer bound and hence it is an optimal code.
\end{example}

\section{Fibonacci Codes and Secret Sharing Schemes}

\subsection{Secret Sharing Schemes From Codes}

Secret sharing system is a method of projecting a secret data to finitely
many participants with the aim that a designed number of or designed
participants can recover the data. In this system, a secret data $s$ is
divided into shares and distributed to participants from the set $P=\left\{
P_{1},P_{2},...,P_{n-1}\right\} $ in such a way that only authorized subsets
of $P$ can reconstruct the secret whereas unauthorized subsets cannot
reconstruct the secret. There are several secret sharing system in
literature \cite{Asmuth},\cite{Blakley},\cite{McEliece},\cite{Shamir}. One of them is based on
coding theory. In 1993, Massey has shown that every linear code can be used
to construct the secret sharing scheme \cite{Massey}. Let now us recall the
system given by Massey. Let $C$ be an $\left[ n,k,d\right] $ linear code
over finite field $\mathbb{F}_{p}$ and $G=\left[ g_{0},g_{1},...,g_{n-1}%
\right] $ be a generator matrix of $C$ where $g_{i}$'s are the column
vectors of $G$. In this system, column vectors of $G$ are nonzero. Dealer,
who is a person building the system, randomly choose a vector from $u=\left(
u_{0},u_{1},...,u_{k-1}\right) \in \mathbb{F}_{p}^{k}$ to generate the
codeword $uG=\left( v_{0},v_{1},...,v_{n-1}\right) $. The dealer picks the
first coordinate of a codeword as a secret, i.e., $s=v_{0}=ug_{0}$, and
distributes $v_{i}$ to participants $P_{i}$ as a share for $1\leq i\leq n-1$%
. Since $s=v_{0}=ug_{0}$, it is easily seen that set of shares $\left\{
v_{i_{1}},v_{i_{2}},...,v_{i_{t}}\right\} $ determines the secret $s$ if and
only if\ $g_{0}$ is a linear combination of $%
g_{i_{1}},g_{i_{2}},...,g_{i_{t}}$. To recover the secret $s$, firstly the
linear equation $g_{0}=\sum\limits_{j=1}^{t}x_{j}g_{i_{j}}$ is solved and $%
x_{j}$ is found, then the secret is computed by%
\begin{equation*}
v_{0}=ug_{0}=\sum\limits_{j=1}^{t}x_{j}ug_{i_{j}}=\sum%
\limits_{j=1}^{t}x_{j}v_{i_{j}}\text{.}
\end{equation*}
\begin{definition}
\cite{Massey} Let $v$ be a vector of length $n$ over $\mathbb{F}_{p}.$ The
support of $v$ is defined as%
\[
supp(v)=\left\{ 0\leq i\leq n-1:v_{i}\neq 0\right\} .
\]%
We say that a vector $v_{2}$ covers a vector $v_{1}$ if the support of
vector $v_{2}$ contains that of $v_{1} i.e. supp(v_{1})\subseteq
supp(v_{2}). $
\end{definition}
\begin{definition}
\cite{Massey} A nonzero vector $c$ is called minimal if it only covers its
scalar multiples. If the first component of minimal vector $c$ is $1$, then
the vector $c$ is called minimal codeword.
\end{definition}
\begin{definition}
\cite{Li} The family of all authorized subsets of $P$ is called access
structure of the scheme. Authorized subsets of $P$ are called minimal access
sets if they can reconstruct the secret $s$ but any of its proper subsets
cannot reconstruct the secret $s$.
\end{definition}
Hence we have the following main lemma:
\begin{lemma}
\label{l4}\cite{Massey} Let $C$ be an $\left[ n,k,d\right] $ linear code
over finite field $\mathbb{F}_{p}$ and $C^{\perp }$ be the dual code of $C$.
In the secret sharing scheme based on $C,$ a set of shares $\left\{
v_{i_{1}},v_{i_{2}},...,v_{i_{t}}\right\} $ recovers the secret $s$ if and
only if\ there is a codeword in $C^{\perp }$ such that%
\[
\left( 1,0,...,0,c_{i_{1}},0,...,c_{i_{t}},0,...,0\right)
\]%
where $c_{i_{j}}\neq 0$ for at least one $j,$ $1\leq i_{1}<...<i_{m}\leq n-1$
and $1\leq m\leq n-1.$
\end{lemma}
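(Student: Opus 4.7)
The plan is to argue via the standard duality between linear dependencies among columns of a generator matrix and codewords of the dual code, relying on the characterization of recovery already recalled in the paragraph preceding the lemma.

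First I would fix notation. Let $G = [g_0, g_1, \ldots, g_{n-1}]$ be the generator matrix with columns $g_i \in \mathbb{F}_p^k$, so that a codeword $uG$ has $i$th coordinate $u \cdot g_i$. The key preliminary observation is that a vector $c = (c_0, c_1, \ldots, c_{n-1})$ lies in $C^\perp$ if and only if $\sum_{i=0}^{n-1} c_i g_i = 0$ in $\mathbb{F}_p^k$. This is immediate: $c \in C^\perp$ means $\langle c, uG \rangle = 0$ for every $u \in \mathbb{F}_p^k$, and rewriting $\langle c, uG \rangle = u \cdot \bigl(\sum_i c_i g_i\bigr)$ shows this holds for all $u$ exactly when $\sum_i c_i g_i = 0$. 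I would state this as a small preparatory remark.

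Next I would prove the two directions of the biconditional. As recalled in the paragraph preceding the lemma, the shares $\{v_{i_1},\ldots,v_{i_t}\}$ recover $s$ if and only if $g_0$ lies in the span of $g_{i_1},\ldots,g_{i_t}$, i.e.\ $g_0 = \sum_{j=1}^t x_j g_{i_j}$ for some $x_j \in \mathbb{F}_p$. Rearranging, this is equivalent to
\[
1 \cdot g_0 + \sum_{j=1}^t (-x_j)\, g_{i_j} = 0,
\]
and by the preparatory remark this is equivalent to the vector $(1,0,\ldots,0,-x_1,0,\ldots,-x_t,0,\ldots,0)$ belonging to $C^\perp$. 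Setting $c_{i_j} := -x_j$ yields a codeword of $C^\perp$ of exactly the claimed shape. For the converse, such a codeword immediately produces scalars $x_j = -c_{i_j}$ realizing $g_0$ as a linear combination of $g_{i_1},\ldots,g_{i_t}$, so the shares recover the secret.

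The only subtle point, and the one I would be careful with, is the nondegeneracy condition ``$c_{i_j}\neq 0$ for at least one $j$''. If every $c_{i_j}$ were zero, the putative codeword would be $(1,0,\ldots,0)$, forcing $g_0 = 0$; but the scheme by construction assumes all columns $g_i$ are nonzero, a contradiction. Conversely, starting from a valid recovery one has $g_0 = \sum_j x_j g_{i_j}$ with $g_0 \neq 0$, so not all $x_j$ vanish, guaranteeing at least one nonzero $c_{i_j}$. I do not expect any technical obstacle beyond tracking this nonzero condition and keeping the support of the dual codeword aligned with the index set of the chosen shares.
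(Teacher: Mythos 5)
Your proof is correct. The paper itself gives no proof of this lemma (it is quoted from Massey with a citation), but the argument you give is the standard one and correctly completes the sketch in the paragraph preceding the lemma: the identity $\langle c, uG\rangle = u\cdot\bigl(\sum_i c_i g_i\bigr)$ translates membership in $C^{\perp}$ into a linear dependency among the columns of $G$, which is exactly the recovery criterion $g_0=\sum_j x_j g_{i_j}$, and your handling of the nondegeneracy condition via the assumption that all columns of $G$ are nonzero is the right way to close the loop.
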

From Lemma \ref{l4}, it is clear that there is a one to one correspondence
between the set of minimal access sets and sets of minimal codewords. But,
it is very hard to find the minimal codewords of linear codes in general.

\subsection{Access Structures From Fibonacci Codes}

In this section, we consider the secret sharing schemes obtained from
Fibonacci codes whose minimal codewords can be characterized. Let us remind
two lemmas in the literature which state the main results how to determine
the access structure.
\begin{lemma}
\label{l5}\cite{Ash},\cite{Barg} Let $C$ be an $[n,k,d]$ code over $\mathbb{F}_{p}.$
Let $w_{\min },w_{\max }$ be a minimum and maximum nonzero weight\ of $C,$
respectively. If%
\[
\frac{w_{\min }}{w_{\max }}>\frac{p-1}{p}
\]%
then each nonzero codeword of $C$ is a minimal vector.
\end{lemma}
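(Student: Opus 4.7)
The plan is to prove the contrapositive: assuming some nonzero $c_2\in C$ covers a nonzero codeword $c_1$ that is not a scalar multiple of $c_2$, I will deduce $w_{\min}/w_{\max}\le (p-1)/p$, which negates the hypothesis. The main device is an averaging identity over the pencil $\{c_2-\lambda c_1:\lambda\in\mathbb{F}_p^{\ast}\}$ of codewords, all of whose supports lie inside $\mathrm{supp}(c_2)$ by the covering assumption.

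First I would evaluate $S=\sum_{\lambda\in\mathbb{F}_p^{\ast}} w(c_2-\lambda c_1)$ by a coordinate-wise count. Since $\mathrm{supp}(c_1)\subseteq\mathrm{supp}(c_2)$, coordinates outside $\mathrm{supp}(c_2)$ contribute nothing. A coordinate $i\in\mathrm{supp}(c_2)\setminus\mathrm{supp}(c_1)$ stays equal to $c_{2,i}\neq 0$ for every $\lambda$, contributing $p-1$ to $S$. A coordinate $i\in\mathrm{supp}(c_1)$ is killed by exactly one value $\lambda=c_{2,i}/c_{1,i}\in\mathbb{F}_p^{\ast}$, so it contributes $p-2$. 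Summing yields
\[
S=(p-1)\bigl(w(c_2)-w(c_1)\bigr)+(p-2)\,w(c_1)=(p-1)\,w(c_2)-w(c_1).
\]

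Next I would exploit the assumption that $c_1$ is not a scalar multiple of $c_2$: this forces $c_2-\lambda c_1\neq 0$ for every $\lambda\in\mathbb{F}_p^{\ast}$, so each of the $p-1$ terms in $S$ is at least $w_{\min}$. Therefore
\[
(p-1)\,w_{\min}\;\le\; S\;=\;(p-1)\,w(c_2)-w(c_1)\;\le\;(p-1)\,w_{\max}-w_{\min},
\]
where the last inequality uses $w(c_2)\le w_{\max}$ and $w(c_1)\ge w_{\min}$. Rearranging gives $p\cdot w_{\min}\le (p-1)\,w_{\max}$, i.e.\ $w_{\min}/w_{\max}\le (p-1)/p$, contradicting the hypothesis.

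The main obstacle is really the coordinate-by-coordinate bookkeeping that produces the identity for $S$; once it is in hand, the rest is a two-line inequality chase. One subtlety worth checking is the uniqueness of the $\lambda$ that kills a given coordinate in $\mathrm{supp}(c_1)$, which relies only on $\mathbb{F}_p^{\ast}$ being a group, and one must also note that the degenerate case where $c_1$ and $c_2$ are already proportional is trivially consistent with minimality, so excluding it at the start loses nothing.
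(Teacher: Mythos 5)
Your proof is correct. Note that the paper does not prove this lemma at all: it is quoted verbatim from the cited references of Ashikhmin--Barg, so there is no in-paper argument to compare against. What you have reconstructed is essentially the standard proof from those references: the coordinate-wise evaluation of $\sum_{\lambda\in\mathbb{F}_p^{\ast}}w(c_2-\lambda c_1)=(p-1)w(c_2)-w(c_1)$ is right (each position of $\mathrm{supp}(c_1)$ is annihilated by exactly one $\lambda\in\mathbb{F}_p^{\ast}$, which lies in $\mathbb{F}_p^{\ast}$ because the covering hypothesis forces $c_{2,i}\neq 0$ there), the nonvanishing of every $c_2-\lambda c_1$ follows correctly from $c_1$ not being proportional to $c_2$, and the inequality chase $p\,w_{\min}\le(p-1)\,w_{\max}$ cleanly negates the hypothesis. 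The argument also degrades gracefully to $p=2$, where the sum has a single term.
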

Lemma \ref{l5} states that if the weights of a linear code are close enough
to each other, then each nonzero codeword of the code is minimal. The
following lemma characterizes minimal access set of $C$ where each nonzero
codeword is a minimal vector.
\begin{lemma}
\label{l6}\cite{Ding} Let $C$ be an $[n,k,d]$ code over $\mathbb{F}_{p}$,
and let $G=[g_{0},g_{1},...,g_{n-1}]$ be its generator matrix. If each
nonzero codeword of $C$ is a minimal vector, then in the secret sharing
scheme based on $C^{\perp}$, there are altogether $p^{k-1}$ minimal access
sets. In addition, we have the followings:
\begin{enumerate}
\item If $g_{i}$ is a multiple of $g_{0}$, $1\leq i\leq n-1$, then
participant $P_{i}$ must be in every minimal access set. Such a participant
is called a dictatorial participant.
\item If $g_{i}$ is not a multiple of $g_{0}$, $1\leq i\leq n-1$, then
participant $P_{i}$ must be in $\left( p-1\right) p^{k-2}$ out of $p^{k-1}$
minimal access sets.
\end{enumerate}
\end{lemma}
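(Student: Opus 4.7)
The plan is to reduce both claims to a counting problem on codewords of $C = (C^{\perp})^{\perp}$ with prescribed first coordinate, and then exploit the minimality hypothesis to convert that count into a count of minimal access sets. By Lemma~\ref{l4}, a subset of participants $\{P_{i_1}, \ldots, P_{i_t}\}$ recovers the secret in the scheme based on $C^{\perp}$ if and only if there is a codeword of $C$ of the form $(1, 0, \ldots, 0, c_{i_1}, 0, \ldots, c_{i_t}, 0, \ldots, 0)$ with at least one $c_{i_j} \neq 0$; in other words, a codeword $v \in C$ with $v_0 = 1$ and $\mathrm{supp}(v) \setminus \{0\} \subseteq \{i_1, \ldots, i_t\}$. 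So the minimal access sets correspond exactly to the supports (with the $0$ entry stripped off) of minimal codewords of $C$ having first coordinate equal to $1$.

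The next step is to count such codewords. Writing $v = uG$ for $u \in \mathbb{F}_p^k$, the first coordinate is the linear functional $u \mapsto u g_0$. Since $g_0 \neq 0$ this functional is surjective, so its preimage of $1$ has size $p^{k-1}$. Under the minimality hypothesis I would then argue that distinct codewords with first coordinate $1$ give distinct access sets: if two minimal codewords $v, v' \in C$ share the same support then each covers the other, so by minimality $v' = \lambda v$, and comparing first coordinates forces $\lambda = 1$. Hence the $p^{k-1}$ codewords with $v_0 = 1$ produce $p^{k-1}$ pairwise distinct minimal access sets, proving the first assertion.

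For part (1), if $g_i = \mu g_0$ with $\mu \neq 0$ (nonzero because the columns of $G$ are assumed nonzero), then for any codeword $v = uG$ with $v_0 = 1$ we have $v_i = u g_i = \mu\, v_0 = \mu \neq 0$, so $i$ lies in the support of every such codeword and hence $P_i$ lies in every minimal access set. For part (2), if $g_i$ is not a scalar multiple of $g_0$, then $g_0$ and $g_i$ are linearly independent in $\mathbb{F}_p^k$, so the linear map $u \mapsto (u g_0, u g_i)$ is surjective onto $\mathbb{F}_p^2$ with kernel of dimension $k-2$. The number of $u$ with $v_0 = 1$ and $v_i = 0$ is therefore $p^{k-2}$, and subtracting from $p^{k-1}$ gives $(p-1)p^{k-2}$ codewords with $v_0 = 1$ and $v_i \neq 0$, which is the number of minimal access sets containing $P_i$.

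I expect no serious obstacle: the whole argument is a bookkeeping exercise on linear functionals on $\mathbb{F}_p^k$ combined with Lemma~\ref{l4}. The one point that deserves care, and which I would double-check rather than gloss over, is the bijection between codewords with $v_0 = 1$ and minimal access sets; this relies crucially on the minimality hypothesis to rule out two distinct such codewords having the same support, and on the nonvanishing of $g_0$ to make the first-coordinate functional surjective. Once these are in place the numerical claims follow directly from dimension counts of affine fibres of surjective linear maps.
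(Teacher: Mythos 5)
Your proof is correct, but note that the paper itself offers no proof of this statement: it is quoted verbatim from Ding and Yuan \cite{Ding} as a known result, so there is no in-paper argument to compare against. Your argument is the standard one and is complete: Lemma~\ref{l4} identifies access sets for the scheme based on $C^{\perp}$ with supports of codewords of $(C^{\perp})^{\perp}=C$ having first coordinate $1$; the hypothesis that every nonzero codeword of $C$ is minimal guarantees both that each such codeword yields a \emph{minimal} access set (no proper subset of its support can carry another codeword with first coordinate $1$) and that distinct such codewords have distinct supports; and the counts $p^{k-1}$, ``all of them,'' and $(p-1)p^{k-2}$ then follow from the fibre sizes of the surjective linear maps $u\mapsto ug_{0}$ and $u\mapsto(ug_{0},ug_{i})$, using that the columns of $G$ are nonzero and, in part (2), that $g_{0}$ and $g_{i}$ are linearly independent. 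The two points you flagged as needing care --- injectivity of the codeword-to-support correspondence and surjectivity of the first-coordinate functional --- are exactly the right ones, and you handle both.
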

\begin{theorem}
\label{t10} From Corollary \ref{c7} and Table \ref{tab:3}, in the secret
sharing scheme based on the dual code of the code with parameters $\left[
2p+2,2,2p\right] $ over $\mathbb{F}_{p},$ there are $p$ minimal access sets
and $P_{p+1}$ is a dictatorial participant. Furthermore, each of the other
participant $P_{i}$ is involved in $\left( p-1\right) $ minimal access sets.
\end{theorem}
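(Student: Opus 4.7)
The plan is to invoke Lemmas \ref{l5} and \ref{l6} together with the weight distribution in Table \ref{tab:3}. First, I would read off from the row of Table \ref{tab:3} with $l = 2p+2$ and $\beta(p) = 2$ that $C$ has a unique nonzero weight $w_{\min} = w_{\max} = 2p$, so $w_{\min}/w_{\max} = 1 > (p-1)/p$. Lemma \ref{l5} then gives that every nonzero codeword of $C$ is a minimal vector, and Lemma \ref{l6} applied with $k = 2$ immediately yields $p^{k-1} = p$ minimal access sets in the scheme based on $C^\perp$.

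Next, I would identify the dictatorial participants. The generator matrix of $C$ is $G = \binom{f(x)}{xf(x)}$, whose $i$-th column is $g_i = \binom{F_i}{F_{i-1}}$ (indices read mod $l = 2p+2$). In particular $g_0 = \binom{0}{F_{l-1}}$, and the period condition $(F_l, F_{l+1}) = (0,1)$ forces $F_{l-1} = F_{l+1} - F_l = 1$, so $g_0$ is nonzero. By Lemma \ref{l6}(1), $P_i$ is dictatorial exactly when $g_i$ is a nonzero scalar multiple of $g_0$, which forces $F_i = 0$. Theorem \ref{rob1} then gives $\alpha(p) = l_p/\beta(p) = p+1$, and because every zero of a Fibonacci sequence modulo $p$ falls at a multiple of $\alpha(p)$, the only zero indices in $\{0, 1, \ldots, 2p+1\}$ are $i = 0$ and $i = p+1$.

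To finish, I would verify that $g_{p+1} = \binom{0}{F_p}$ really is a nonzero multiple of $g_0$, i.e.\ that $F_p \not\equiv 0 \pmod p$. Since $l_p = 2p+2$ divides $2(p+1)$ but not $p-1$, Theorem \ref{wall2} forces $p \equiv \pm 3 \pmod{10}$, and then Lemma \ref{vajda1} yields $F_p \equiv -1 \pmod p$. Hence $P_{p+1}$ is the unique dictatorial participant, and by Lemma \ref{l6} it belongs to all $p$ minimal access sets, while each remaining $P_i$ belongs to $(p-1)p^{k-2} = p-1$ of them, as claimed. The main obstacle I foresee is pinning down that the only zero indices of the Fibonacci sequence within one period are $0$ and $p+1$; this reduces cleanly via the identity $l_p = \alpha(p)\beta(p)$ and the observation that each block of $\alpha(p)$ terms between consecutive zeros is a scalar multiple of the first block.
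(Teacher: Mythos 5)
Your proposal is correct and follows essentially the same route as the paper: Lemma \ref{l5} applied to the one-weight code from Table \ref{tab:3} plus Lemma \ref{l6} with $k=2$ gives the $p$ minimal access sets and the count $p-1$ for non-dictatorial participants. Your identification of the dictatorial participant is actually more careful than the paper's (which merely asserts that $g_{p+1}$ is a multiple of $g_0$ from the "two multiple parts"), since you pin down via $\alpha(p)=l_p/\beta(p)=p+1$ that the only zero indices in one period are $0$ and $p+1$, and check $F_p\equiv-1\not\equiv 0$ so that $g_{p+1}$ is indeed a nonzero multiple of $g_0$.
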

\begin{proof}
The codes of parameters $\left[ 2p+2,2,2p\right] $ are one
weight codes and by Lemma \ref{l5} all codewords of such codes are minimal.
Thus, from Lemma \ref{l6} there are $p^{2-1}=p$ minimal access sets. Also,
from the Fibonacci sequence $\mod p$ and $\beta \left( p\right) =2,$ we have
two multiple parts in a sequence. So, the $g_{p+1}$ column is a multiple of $%
g_{0}.$ This means that $P_{p+1}$ is a dictatorial participant. The other
remaining participants are involved in $\left( p-1\right) p^{2-2}=\left(
p-1\right) $ minimal access sets.
\end{proof}
\begin{example}
The code $C$ given in Example \ref{e5} has parameters $\left[ 16,2,14\right]
$ with the following weight distribution:
\[
u^{16}+48u^{2}v^{14}.
\]%
In the secret sharing scheme based on the dual code of $C,$ the number of
minimal access sets are $7,$ and the list of all these minimal access sets
are as follows:
\begin{eqnarray*}
&&\left\{ 2,3,4,5,6,7,8,10,11,12,13,14,15\right\} ,\left\{
1,2,3,4,5,6,8,9,10,11,12,13,14\right\} , \\
&&\left\{ 1,2,3,4,5,7,8,9,10,11,12,13,15\right\} ,\left\{
1,2,4,5,6,7,8,9,10,12,13,14,15\right\} , \\
&&\left\{ 1,2,3,5,6,7,8,9,10,11,13,14,15\right\} ,\left\{
1,2,3,4,6,7,8,9,10,11,12,14,15\right\} , \\
&&\text{ \ \ \ \ \ \ \ \ \ \ \ \ \ \ \ \ \ \ \ \ \ \ \ \ }\left\{
1,3,4,5,6,7,8,9,11,12,13,14,15\right\} .
\end{eqnarray*}%
where $\{2,3,4,5,6,7,8,10,11,12,13,14,15\}$ denotes the access set
\[
\{P_{2},P_{3},P_{4},P_{5},P_{6},P_{7},P_{8},P_{10},P_{11},P_{12},P_{13},P_{14},P_{15}\}.
\]%
In this example, $P_{8}$ is dictatorial participant and each participant is
involved in exactly $6$ minimal access sets.
\end{example}
\begin{theorem}
\label{t11} From Corollary \ref{c7} and Table \ref{tab:2}, in the secret
sharing schemes based on the dual code of the code of parameters $\left[
p-1,2,p-2\right] $ over $\mathbb{F}_{p}$, there are $p-2$ minimal access
sets. Furthermore, each participant $P_{i}$ is involved in $\left(
p-3\right) $ minimal access sets.
\end{theorem}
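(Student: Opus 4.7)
The first thing to notice is that Lemma \ref{l5} does \emph{not} apply here: for the $[p-1,2,p-2]_{p}$ code the ratio of minimum to maximum nonzero weight is $(p-2)/(p-1)$, which is strictly less than $(p-1)/p$, so the comfortable ``all nonzero codewords are minimal'' argument that drove Theorem \ref{t10} is unavailable. Consequently, I plan to identify the minimal codewords of $C$ by hand, using the fact that $C$ is MDS (Theorem \ref{siap}) together with the weight distribution in Table \ref{tab:2}.

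The first step is to split the nonzero codewords into the two weight classes. The $2(p-1)$ codewords of weight $p-1$ have full support $\{0,1,\dots,p-2\}$; in particular, every weight-$(p-2)$ codeword is covered by each of them. Since weight-$(p-1)$ and weight-$(p-2)$ codewords are not scalar multiples of one another, no codeword of weight $p-1$ is minimal. For the weight-$(p-2)$ class, I will invoke the standard MDS fact: for any subset $S$ of coordinates with $|S|=d=p-2$, the subspace of codewords supported in $S$ has dimension exactly $k-(n-d)=1$; otherwise a nontrivial linear combination would produce a codeword of weight $<d$. Hence two codewords of weight $p-2$ that share the same support must be proportional, and the weight-$(p-2)$ codewords are exactly the minimal ones of $C$.

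Next, I will count the minimal codewords whose first coordinate equals $1$, since by Lemma \ref{l4} these are in bijection with the minimal access sets of the secret-sharing scheme built on $C^{\perp}$. A weight-$(p-2)$ codeword is determined (up to scalar) by the single position $i\in\{0,1,\dots,p-2\}$ where it vanishes. If $i=0$ the codeword has zero in the secret coordinate and cannot be rescaled to have first coordinate $1$; if $i\in\{1,\dots,p-2\}$, rescaling is possible and unique. This yields exactly $p-2$ minimal codewords with leading coordinate $1$, and therefore $p-2$ minimal access sets, proving the first assertion.

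Finally, I will read off the participation count directly from this description. The minimal access set attached to the codeword vanishing at position $i\in\{1,\dots,p-2\}$ consists of the participants $P_j$ for $j\in\{1,\dots,p-2\}\setminus\{i\}$, since the secret sits at position $0$ and all other coordinates are nonzero. Thus a given participant $P_j$ belongs to every minimal access set except the one indexed by $i=j$, i.e.\ to exactly $p-3$ of them. The only delicate step is the minimality analysis of Step~1; everything else is bookkeeping once the MDS structure is exploited.
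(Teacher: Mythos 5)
Your argument is correct, and it is tighter than the paper's own proof while arriving at the same count by a partly different route. The paper's proof also reduces the problem to counting weight-$(p-2)$ codewords with leading coordinate $1$, but it gets the number $p-2$ by dividing through the weight distribution of Table \ref{tab:2} (of the $p$ codewords with first component $1$, exactly $2(p-1)/(p-1)=2$ have full weight, leaving $p-2$), and it simply asserts that the weight-$(p-2)$ codewords with first component $1$ are the minimal ones; for the participation count it resorts to a rather opaque counting-by-contradiction with $\binom{p-3}{p-4}=p-3$. You instead justify the minimality claim properly -- showing via the MDS/support argument that two weight-$(p-2)$ codewords with the same support are proportional, and observing that the full-weight codewords cover everything and so cannot be minimal -- and you obtain the explicit description of the $i$-th minimal access set as $\{1,\dots,p-2\}\setminus\{i\}$, from which both the count $p-2$ and the participation number $p-3$ drop out by inspection (and which matches the listed access sets in the $[10,2,9]_{11}$ example). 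Your opening observation that Lemma \ref{l5} fails here since $(p-2)/(p-1)<(p-1)/p$ is also correct and explains why Theorem \ref{t11} needs a hands-on argument unlike Theorem \ref{t10}. The one point worth stating explicitly is that the $1$-dimensional space of codewords vanishing at a fixed position $i$ is nonzero for every $i$ (which follows since its $p-1$ nonzero elements account exactly for the $(p-1)^2$ codewords of weight $p-2$ in Table \ref{tab:2}), so that every index $i\in\{1,\dots,p-2\}$ genuinely contributes one minimal access set; but this is immediate and does not affect the validity of the proof.
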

\begin{proof}
From Table \ref{tab:2}, it is easily seen that the codewords\
of weight $p-2$ whose first component $1$ are minimal codewords. We have in
total $\frac{p^{2}}{p}=p$ codewords whose first component are $1$. We should
remove full weight codes whose first component are $1,$ to get minimality.
There are $2\left( p-1\right) $ codewords which have weight $p-1.$ So, there
are $\frac{2\left( p-1\right) }{p-1}=2$ codewords whose first component is $%
1 $. This give us, the total number of the codewords of weight\ $p-2$ whose
first component is $1.$ Consequently, we have $p-2$ minimal codewords.
Therefore, there are $p-2$ minimal access sets.
\newline
\textbf{Second part of proof:} If we fix $i,$ $1\leq i\leq p-2,$ participant
$P_{i}$ is involved in $p-2$ minimal access sets.\ In this case we can
choose remaining participants to recover the key in $\left(
\begin{array}{c}
p-3 \\
p-4%
\end{array}%
\right) =p-3$ ways. This is a contradiction to $p-2$ minimal access sets.
Thus each of participant $P_{i}$ is involved in $\left( p-3\right) $ minimal
access sets.
\end{proof}
We only state and skip the proofs of the following two lemmas since they can
be proved similarly as in Theorem \ref{t10} and Theorem \ref{t11}.
\begin{corollary}
\label{c8} From Corollary \ref{c7} and Table \ref{tab:3}, in the secret
sharing scheme based on the dual code of the code of parameters $\left[
p-1,2,p-3\right] $ over $\mathbb{F}_{p}$ , there are $\frac{p-3}{2}$ minimal
access sets and $P_{\frac{_{p-1}}{2}}$ is a dictatorial participant.\
Furthermore, each of the other participant $P_{i}$ is involved in $\left(
\frac{p-5}{2}\right) $ minimal access sets.
\end{corollary}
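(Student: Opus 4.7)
The plan is to adapt the template of Theorem \ref{t11} (for counting minimal codewords) and Theorem \ref{t10} (for identifying the dictatorial participant). First I would read off the weight distribution of the $[p-1,2,p-3]$ Fibonacci code from Corollary \ref{c7} and Table \ref{tab:3}: the nonzero weights are $p-3$ with multiplicity $(p-1)^2/2$ and $p-1$ with multiplicity $(p-1)(p+3)/2$.

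Next I would classify the minimal codewords. Every codeword of weight $p-1$ has full support (since the length is $p-1$) and therefore covers every other codeword, so it is not minimal. Conversely, every weight $p-3$ codeword is minimal: in a two-dimensional code, two codewords with the same support of size $p-3$ must be scalar multiples, because otherwise a suitable linear combination would vanish on a common coordinate and yield a nonzero codeword of weight at most $p-4$, contradicting the minimum distance $p-3$. Restricting to first component $1$ as required by Lemma \ref{l4}, scalar-multiplication symmetry distributes the $(p-1)(p+3)/2$ weight $p-1$ codewords evenly over the $p-1$ nonzero first-coordinate values, giving $(p+3)/2$ such codewords with first entry $1$. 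Subtracting from the $p$ codewords with first component $1$ leaves $(p-3)/2$ minimal codewords, hence $(p-3)/2$ minimal access sets.

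To exhibit the dictatorial participant I would examine the generator matrix whose rows are $xf(x)$ and $f(x)$, so that column $i$ has the form $(F_{(i-1)\bmod l}, F_i)^T$. With $\beta(p)=2$ and period $l=p-1$, the Fibonacci sequence modulo $p$ has exactly two zeros per period, located at indices $0$ and $\alpha(p)=(p-1)/2$. Hence both $g_0$ and $g_{(p-1)/2}$ have zero second coordinate and nonzero first coordinate, so $g_{(p-1)/2}$ is a nonzero scalar multiple of $g_0$; consequently every codeword with first component $1$ has nonzero entry at position $(p-1)/2$, making $P_{(p-1)/2}$ dictatorial, exactly as in the proof of Theorem \ref{t10}.

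Finally, for each non-dictatorial $P_i$ with $i \notin \{0,(p-1)/2\}$, I would count the minimal codewords whose $i$-th coordinate is zero. Among the $p$ codewords with first component $1$ (parametrized by a single free scalar), exactly one value of the parameter solves the linear equation forcing coordinate $i$ to vanish. The hard part, and the step requiring care, is to verify that this codeword is truly of weight $p-3$ rather than $p-2$: this is immediate from Table \ref{tab:3}, which contains no codewords of weight $p-2$, so any codeword with a zero in $\{1,\ldots,p-2\}$ automatically has a second zero there and is therefore a weight $p-3$ minimal codeword. Consequently $P_i$ is excluded from exactly one minimal access set and belongs to the remaining $(p-3)/2-1=(p-5)/2$.
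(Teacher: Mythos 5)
Your proof is correct and follows exactly the route the paper intends: the paper omits the proof of Corollary \ref{c8}, stating only that it goes "similarly as in Theorem \ref{t10} and Theorem \ref{t11}," and you adapt precisely those two arguments (counting the $p$ codewords with first entry $1$ and removing the $(p+3)/2$ full-weight ones, then locating the dictatorial column at the second zero $\alpha(p)=(p-1)/2$ of the Fibonacci sequence). If anything, your final step — showing each non-dictatorial $P_i$ is excluded from exactly one minimal access set via the one-parameter family of codewords with $c_0=1$ and the absence of weight $p-2$ codewords — is cleaner and more rigorous than the corresponding "second part" of the paper's own proof of Theorem \ref{t11}.
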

\begin{corollary}
\label{c9} From Corollary \ref{c7} and Table \ref{tab:4}, in the secret
sharing schemes based on the dual code of the code of parameters $\left[
p-1,2,p-5\right] $ over $\mathbb{F}_{p}$, there are $\frac{p-5}{4}$ minimal
access sets and the set of $\left\{ P_{\frac{_{p-1}}{4}},P_{\frac{p-1}{2}%
},P_{\frac{3(p-1)}{4}}\right\} $ are a dictatorial participants.\
Furthermore, each of the other participant $P_{i}$ is involved in $\left(
\frac{p-9}{4}\right) $ minimal access sets.
\end{corollary}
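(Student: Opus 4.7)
The plan is to mirror the proofs of Theorems \ref{t10} and \ref{t11}, substituting the $\beta(p)=4$ zero pattern of the Fibonacci sequence for the previous $\beta(p)=1,2$ cases.

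First I would identify the minimal codewords of $C$. By Table \ref{tab:4}, the only nonzero weights are $p-5$ and $p-1$. A full-weight codeword covers every other codeword and is therefore not minimal since $\dim C = 2 > 1$. Conversely, each weight-$(p-5)$ codeword $c$ is minimal: if $c' \in C$ satisfies $\mathrm{supp}(c') \subseteq \mathrm{supp}(c)$, then for any $i \in \mathrm{supp}(c)$ the vector $c' - (c'_i/c_i) c$ vanishes at $i$ and so has weight strictly less than $p-5$; the minimum-distance hypothesis forces it to be $0$, hence $c'$ is a scalar multiple of $c$. So the minimal codewords are exactly the weight-$(p-5)$ codewords with leading coordinate $1$.

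Next I would count these. Theorem \ref{rob1} gives $\alpha(p) = (p-1)/4$, and the identity $F_{\alpha(p)+i} = s(p) F_i$ for all $i$ (immediate from $F_{\alpha(p)} = 0$, $F_{\alpha(p)+1} = s(p)$ and linearity of the Fibonacci recursion) says that cyclically shifting $f(x)$ by $\alpha(p)$ coordinates rescales it by $s(p)$. Hence the $p-1$ cyclic shifts of $f(x)$ partition into $\alpha(p) = (p-1)/4$ scalar-equivalence classes, consistent with the $(p-1)^2/4$ count of weight-$(p-5)$ codewords in Table \ref{tab:4}. The shift class $k=0$ has $F_0 = 0$ at coordinate $0$ and so contributes no codeword with leading coordinate $1$; each of the remaining $(p-5)/4$ classes contributes exactly one, giving $(p-5)/4$ minimal access sets.

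For the dictatorial participants I take $G = \binom{xf(x)}{f(x)}$ as in Example \ref{e3}, so column $j$ is $(F_{j-1}, F_j)^{\top}$ (indices mod $p-1$) and in particular $g_0 = (F_{p-2}, 0)^{\top}$ with $F_{p-2} \neq 0$ (since $p-2 \not\equiv 0 \pmod{\alpha(p)}$ whenever $p > 5$). Thus $g_j$ is a scalar multiple of $g_0$ iff $F_j \equiv 0 \pmod p$, which within $\{1,\dots,p-2\}$ occurs precisely at $j = (p-1)/4,\,(p-1)/2,\,3(p-1)/4$. Every minimal codeword $c$ then satisfies $c_j = \lambda_j c_0 = \lambda_j \neq 0$, so these three participants lie in the support of every minimal codeword, i.e., in every minimal access set.

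Finally, for a non-dictatorial $P_i$ the zero set of the minimal codeword in shift class $k \in \{1,\dots,(p-1)/4 - 1\}$ is $S_k = \{-k,\,\alpha(p)-k,\,2\alpha(p)-k,\,3\alpha(p)-k\} \pmod{p-1}$; the condition $i \in S_k$ pins $k$ uniquely modulo $\alpha(p)$, and since $i$ is not itself a zero of $F$ this unique $k$ is nonzero, hence lies in $\{1,\dots,\alpha(p)-1\}$. Therefore exactly one of the $(p-5)/4$ minimal codewords vanishes at coordinate $i$, placing $P_i$ in $(p-5)/4 - 1 = (p-9)/4$ minimal access sets. The main obstacle is the bookkeeping in the counting steps: one must leverage the $s(p)$-scaling of $\alpha(p)$-shifts to quotient the $p-1$ raw cyclic shifts into $(p-1)/4$ scalar-equivalence classes before the counts drop out cleanly, and then carry the same quotient through the position-by-position incidence count; a quick sanity check is that the total incidence $3\cdot(p-5)/4 + (p-5)\cdot(p-9)/4$ equals $(p-5)(p-6)/4$, the number of minimal codewords times their support size minus one.
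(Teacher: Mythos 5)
Your proof is correct and follows exactly the route the paper prescribes: the paper actually omits the proof of Corollary \ref{c9}, saying only that it can be proved similarly to Theorems \ref{t10} and \ref{t11}, and your argument is a correct, complete instance of that analogy --- the minimal vectors are precisely the weight-$(p-5)$ codewords (justified by your minimum-distance argument, which Theorem \ref{t11} leaves implicit), the count $\frac{p-5}{4}$ drops out of the $\frac{p-1}{4}$ scalar-equivalence classes of cyclic shifts of $f(x)$, and the dictatorial participants are exactly the coordinates $j$ with $F_j\equiv 0$, argued directly rather than by invoking Lemma \ref{l6} (which, as you implicitly recognize, does not literally apply here since the full-weight codewords are not minimal). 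The only cosmetic point is the sign convention in your zero-set $S_k$ (shifting by $k$ gives $j\equiv k\pmod{\alpha(p)}$ rather than $j\equiv -k$), which is a harmless reparametrization of the nonzero residue classes and affects none of the counts.
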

\begin{corollary}
\label{c10} From Corollary \ref{c7} and Table \ref{tab:4}, in the secret
sharing schemes based on the dual code of the code of parameters $\left[
2p+2,2,2p-2\right] $ over $\mathbb{F}_{p}$, there are $\frac{p-1}{2}$minimal
access sets and $\left\{P_{\frac{_{p+1}}{2}},P_{p+1},P_{\frac{3(p+1)}{2}%
}\right\} $ are a dictatorial participant. Furthermore, each of the other
participant $P_{i}$ is involved in $\left(\frac{p-3}{2}\right)$ minimal
access sets.
\end{corollary}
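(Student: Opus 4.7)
The plan is to mimic the strategies of Theorems~\ref{t10} and~\ref{t11}: identify the minimal codewords of the $[2p+2,2,2p-2]_p$ code $C$, count those whose first coordinate equals $1$ (which are in bijection with the minimal access sets by Lemma~\ref{l4}), and then use the column structure of the generator matrix $G=\binom{f(x)}{xf(x)}$ to classify the participants. Note that Lemma~\ref{l5} does not apply directly here because $w_{\min}/w_{\max}=(2p-2)/(2p+2)=(p-1)/(p+1)<(p-1)/p$, so the minimal codewords must be exhibited by hand.

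First I would show that the minimal vectors of $C$ are exactly the codewords of weight $2p-2$. A full-weight codeword has support $\{0,1,\dots,2p+1\}$ and hence covers every codeword, so it fails to be minimal once $p^2-1>p-1$. If two weight-$(2p-2)$ codewords $c,c'$ satisfy $\mathrm{supp}(c')\subseteq\mathrm{supp}(c)$, then a scalar combination $c-\lambda c'$ killing a common nonzero coordinate would be a nonzero codeword of weight $<2p-2$, contradicting the minimum distance; hence $c'=\lambda c$. Parameterising codewords by $u=(u_0,u_1)\mapsto uG$, the first coordinate is $u_1$; the $p-1$ nonzero codewords with $u_1=0$ are the nonzero multiples of $f(x)$, all of weight $2p-2$, and removing them from the $(p^2-1)/2$ codewords of weight $2p-2$ (Table~\ref{tab:4}) leaves $(p-1)^2/2$ such codewords with $u_1\neq 0$. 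Restricting to $u_1=1$ yields $(p-1)^2/(2(p-1))=(p-1)/2$ minimal codewords, giving the claimed $(p-1)/2$ minimal access sets.

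The dictators come from the identity $g_i=(F_i,F_{i-1})^{T}$ and $g_0=(0,1)^{T}$: the column $g_i$ is a scalar multiple of $g_0$ exactly when $F_i=0$. Since $\beta(p)=4$ forces the zeros of $F$ within one period to be equally spaced at $0,\alpha(p),2\alpha(p),3\alpha(p)$ with $\alpha(p)=(p+1)/2$, the nonzero zero positions produce exactly the dictators $P_{(p+1)/2},\,P_{p+1},\,P_{3(p+1)/2}$.

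The main obstacle, and the last step, is to prove that every non-dictator $P_i$ lies in exactly $(p-3)/2$ minimal access sets and not merely on average. For such $i$ we have $F_i\neq 0$, so the equation $c_i=u_0 F_i+F_{i-1}=0$ on a first-coordinate-$1$ codeword pins down the unique value $u_0^{\ast}=-F_{i-1}/F_i$. I would then verify that the resulting codeword really is minimal by viewing it as the generalized Fibonacci sequence $G$ with $G_0=1$, $G_1=u_0^{\ast}$: the identity $G_{i+n}=G_iF_{n-1}+G_{i+1}F_n$ together with $G_i=0$ gives $G_{i+n}=G_{i+1}F_n$, and $G_{i+1}\neq 0$ (otherwise the recurrence forces $G\equiv 0$, contradicting $G_0=1$), so $G$ inherits exactly the four zeros of $F$ per period and has weight $2p-2$. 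Consequently $P_i$ misses exactly one of the $(p-1)/2$ minimal access sets, leaving $(p-1)/2-1=(p-3)/2$, as claimed.
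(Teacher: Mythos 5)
Your proof is correct, and it is essentially the argument the paper intends: the authors omit the proof of Corollary \ref{c10} entirely, asserting only that it can be proved similarly to Theorem \ref{t10} and Theorem \ref{t11}. What you add is worth recording. First, you rightly note that the route of Theorem \ref{t10} through Lemma \ref{l5} is \emph{not} available here, since $w_{\min}/w_{\max}=(2p-2)/(2p+2)=(p-1)/(p+1)<(p-1)/p$; so minimality of the weight-$(2p-2)$ codewords must be proved directly, which you do by the standard minimum-distance argument (equal-size nested supports force proportionality), while full-weight codewords cover non-multiples and so are never minimal. Second, your count $(p^2-1)/2-(p-1)=(p-1)^2/2$, divided by $p-1$ to normalize the first coordinate to $1$, matches the Table \ref{tab:4} data in the style of the proof of Theorem \ref{t11}, and your identification of the dictators via $g_i=(F_i,F_{i-1})^{T}$ and the zeros of $F$ at the multiples of $\alpha(p)=l/\beta(p)=(p+1)/2$ is exactly what Lemma \ref{l6}(1) would give (one should add the one-line justification: if $g_i=\lambda g_0$ with $\lambda\neq 0$, then $c_i=\lambda c_0=\lambda\neq 0$ for every minimal codeword). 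Third, your final step is actually cleaner than the corresponding "contradiction" argument in the paper's proof of Theorem \ref{t11}: for a non-dictator $i$ the unique $u_0^{\ast}$ with $c_i=0$ produces a codeword that cannot be of full weight, and since only the weights $2p-2$ and $2p+2$ occur it is automatically minimal; the detour through the generalized Fibonacci identity $G_{i+n}=G_iF_{n-1}+G_{i+1}F_n$, though correct, is not even needed. In short: same strategy as the paper's template, with the gaps (failure of Lemma \ref{l5}, exactness of the per-participant count) properly filled.
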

\begin{example}
For $\beta \left( p\right) =1$ and $l=p-1,$ we have a linear code of
parameters $[10,2,9]_{11}$ given in Example \ref{e3}. From Theorem \ref{t11}, there are $p-2=9$ minimal access sets and the list of all these minimal
access sets are as follows:
\begin{eqnarray*}
&&\left\{ 2,3,4,5,6,7,8,9\right\} ,\left\{ 1,2,3,4,5,6,7,8\right\} ,\left\{
1,2,3,4,5,6,7,9\right\} , \\
&&\left\{ 1,2,3,5,6,7,8,9\right\} ,\left\{ 1,2,4,5,6,7,8,9\right\} ,\left\{
1,2,3,4,6,7,8,9\right\} , \\
&&\left\{ 1,2,3,4,5,6,8,9\right\} ,\{1,2,3,4,5,7,8,9\},\{1,3,4,5,6,7,8,9\}.
\end{eqnarray*}%
In this example, each participant is involved in exactly $p-3=8$ minimal
access sets.
\end{example}
\begin{example}
Let $\beta \left( p\right) =4$\ and $p=13$. By the Theorem \ref{siap2}, we
have linear code of parameters $[28,2,24]_{13}.$ The weight polynomial of the given code is $u^{28}+84u^{4}v^{24}+84v^{28}.$
From Corollary \ref{c10}, there are $\frac{p-1}{2}=6$ minimal access sets
and $P_{7},P_{14},P_{21}$ are dictatorial participants. The list of all
these minimal access sets are as follows:%
\begin{eqnarray*}
&&\left\{2,3,4,5,6,7,9,10,11,12,13,14,16,17,18,19,20,21,23,24,25,26,27\right\}, \\
&&\left\{1,2,3,4,5,7,8,9,10,11,12,14,15,16,17,18,19,21,22,23,24,25,26\right\}, \\
&&\left\{1,2,3,4,6,7,8,9,10,11,13,14,15,16,17,18,20,21,22,23,24,25,27\right\}, \\
&&\left\{1,2,4,5,6,7,8,9,11,12,13,14,15,16,18,19,20,21,22,23,25,26,27\right\}, \\
&&\left\{1,2,3,5,6,7,8,9,10,12,13,14,15,16,17,19,20,21,22,23,24,26,27\right\}, \\
&&\left\{1,3,4,5,6,7,8,10,11,12,13,14,15,17,18,19,20,21,22,24,25,26,27\right\}.
\end{eqnarray*}%
%In this example, each participant is involved in exactly $\left( \frac{p-3}{2}\right)=5$ minimal access sets.
\end{example}
\section{Conclusion}

In this paper, we studied cyclic codes with generator polynomials derived
from Fibonacci sequences modulo a prime $p.$ We showed that such cyclic
codes enjoy very good properties as they give examples of MDS and optimal
cyclic codes. Also, we are able to determine all parameters of such codes.
Finally, we present applications to secret sharing schemes via Fibonacci
codes.
\begin{acknowledgements}
This research is partially supported by TUBITAK-ARDEB under the project with
grant number 114F388
\end{acknowledgements}


\begin{thebibliography}{99}
\bibitem{Ash} \textsc{A. Ashikhmin, A. Barg, G. Cohen, L. Huguet}, \emph{%
Variations on minimal codewords in linear codes}, Applied Algebra, Algebraic
Algorithms and Error-Correcting Codes, 96-105 (1995).

\bibitem{Barg} \textsc{A. Ashikhmin, A. Barg}, \emph{Minimal vectors in
linear codes}, IEEE T. Inform. Theory \textbf{44.5}, 2010--2017 (1998).

\bibitem{Asmuth} \textsc{C. Asmuth and J. Bloom}, \emph{A modular approach
to key safeguarding}, IEEE T. Inform. Theory \textbf{30.2}, 208--210 (1983).

\bibitem{Blakley} \textsc{G. B. Blakley}, \emph{Safeguarding cryptographic
keys},Proc. AFIPS \textbf{48}, 313--317 (1979).

\bibitem{Burton} \textsc{D. M. Burton}, \emph{Elementary Number Theory},
Tata McGraw-Hill Education, 2006.

\bibitem{Ding} \textsc{C. Ding and J. Yuan}, \emph{Covering and secret
sharing with linear codes}, Dimacs. Ser. Discret. M., 11--25 (2003).

\bibitem{fibo3} \textsc{M. Esmaeili, M. Esmaeili}, \emph{A
Fibonacci-polynomial based coding method with error detection and correction}%
, Computers and Mathematics with Applications \textbf{60.10}, 2738--2752 (2010).

\bibitem{Alg} \textsc{M. Hazewinkel (Editor)}, \emph{Handbook of Algebra,
Volume 1}, Netherlands, 1995.

\bibitem{Horadam} \textsc{A.F. Horadam}, \emph{A generalized Fibonacci
sequence}, American Mathematical Monthly \textbf{68.5}, 455--459 (1961).

\bibitem{fibo2} \textsc{W. H. Kautz}, \emph{Fibonacci codes for
synchronization control}, IEEE Transactions on Information Theory, \textbf{%
11.2}, 284--292 (1965).

\bibitem{fibo1} \textsc{G.Y. Lee, D.H. Choi, and J.S. Kim}, \emph{%
Burst-Error-Correcting Block Code Using Fibonacci Code}, Journal of the
Chungcheong Mathematical Society \textbf{22.3}, 367--374 (2009).

\bibitem{Li} \textsc{Z. Li, X. Ting, L. Hong}, \emph{Secret sharing schemes
from binary linear codes}, Information Sciences, \textbf{180},4412--4419 (2010).

\bibitem{Ling} \textsc{S. Ling and C. Xing}, \emph{Coding Theory; A First
Course}, United Kingdom, 2004.

\bibitem{MacW} \textsc{F.J. MacWilliams and N.J.A. Sloane}, \emph{The Theory
of Error-Correcting Codes}, Vol. 16. Elsevier, 1977.

\bibitem{Massey} \textsc{J. L. Massey}, \emph{Minimal codewords and secret
sharing}, Proceedings of the 6th Joint Swedish-Russian International
Workshop on Information Theory 1993.

\bibitem{McEliece} \textsc{R.J. McEliece and D. V. Sarwate}, \emph{On
sharing secrets and Reed-Solomon codes}, Commun. ACM \textbf{24.9} (1981),
583--584 (1993).

\bibitem{crypt} \textsc{K. Nyberg}, \emph{Differentially uniform mappings
for cryptography}, Advances in Cryptology-EUROCRYPT 93, New York,
55--64.

\bibitem{rob} \textsc{D.W. Robinson}, \emph{The Fibonacci matrix modulo $m$}%
, Fibonacci Quart. \textbf{1.2}, 29--36 (1963).

\bibitem{crypt2} \textsc{W. Si, C. Ding}, \emph{A simple stream cipher with
proven properties}, Cryptography and Communications, \textbf{4},
79--104 (2012).

\bibitem{Shamir} \textsc{A. Shamir}, \emph{How to share a secret}, Commun.
ACM \textbf{22.11}, 612--613 (1979).

\bibitem{wall} \textsc{D. D. Wall}, \emph{Fibonacci Series Modulo m},
American Mathematical Monthly \textbf{67.6}, 525--532 (1960).

\bibitem{vajda} \textsc{S. Vajda}, \emph{Fibonacci and Lucas Numbers, and
the Golden Section}, England, 1989.
\end{thebibliography}
\end{document}